\documentclass[11pt]{article}

\usepackage[T1]{fontenc}
\usepackage[utf8]{inputenc}
\usepackage{lmodern}
\usepackage{amsmath,amssymb,amsthm,mathtools}
\usepackage{booktabs}
\usepackage{enumitem}
\usepackage{hyperref}
\usepackage{cleveref}
\usepackage[a4paper,margin=1in]{geometry}
\usepackage{authblk}
\usepackage[numbers,sort&compress]{natbib} % Compresses [1,2,3] into [1-3]

\hypersetup{
  colorlinks=true,
  linkcolor=blue,
  citecolor=blue,
  urlcolor=blue
}

\usepackage{graphicx}%
\usepackage{multirow}%
\usepackage{amsfonts}%
\usepackage{mathrsfs}%
\usepackage[title]{appendix}%
\usepackage{xcolor}%
\usepackage{textcomp}%
\usepackage{manyfoot}%
\usepackage{textgreek} % for direct use of Greek letters like α in text
\usepackage{algorithm}%
\usepackage{algorithmic}
\usepackage{listings}%
\usepackage{tikz}
\usetikzlibrary{arrows.meta,positioning}

\newtheorem{theorem}{Theorem}%  meant for continuous numbers
\newtheorem{example}{Example}%
\newtheorem{remark}{Remark}%
\newtheorem{definition}{Definition}%

\title{Shortest Path Problem with Subnormal Gaussian Fuzzy Costs}

\author[1]{Hande G\"{u}nay Akdemir}
\affil[1]{Department of Mathematics, Giresun University, T\"{u}rkiye}
\affil[1]{\texttt{hande.akdemir@giresun.edu.tr}} 
\date{}

\author[2]{Murat Moran}
\affil[2]{Department of Computer Engineering, Giresun University, T\"{u}rkiye}
\affil[2]{\texttt{murat.moran@giresun.edu.tr}}

\begin{document}
\maketitle

%%==================================%%
%% Sample for unstructured abstract %%
%%==================================%%

\abstract{This paper addresses the fuzzy shortest path problem in directed graphs, where edge costs are modeled as generalized fuzzy numbers with Gaussian membership functions. We interpret height as an indicator of information reliability. Based on this view, we introduce a weighted geometric mean to aggregate heights during the addition of generalized Gaussian fuzzy numbers. We employ a reliability-aware ranking that jointly considers the core, height, and standard deviation of fuzzy edge costs to determine the shortest path, thereby capturing their central tendency, reliability, and variability while keeping Dijkstra-level complexity per relaxation. The method yields routes that are not only cost-efficient but also supported by highly reliable information. To assess robustness, we construct a crisp baseline from the ranking and conduct Monte Carlo alpha-cut sampling--drawing membership levels uniformly and then sampling within the induced intervals--to recompute path costs and quantify sensitivity via the mean percentage deviation and its standard deviation. Finally, a large-scale case study on the FAA air traffic network demonstrates that the proposed GGFN--SPP framework scales efficiently to real-world networks, balances cost and reliability through $\alpha$--cut aggregation and risk-aware ranking, and exhibits stable performance under Monte Carlo simulations with subnormal fuzzy costs.}

\vspace{0.3cm} % Adds clean spacing after the abstract
\noindent \textbf{Keywords:} Fuzzy shortest path problem, Uncertainty, Ranking index, Generalized fuzzy number, Gaussian membership function, Single-value fuzzy numerical simulation

\maketitle

\section{Introduction}
The Shortest Path Problem (SPP) is a fundamental challenge in graph theory, essential for modeling discrete systems like transportation and communication networks. Among the canonical problems in graph theory, the SPP seeks a path of minimum total cost between a source and a terminal (target) node. In the classical setting, edge costs are deterministic. In many applications, however, these costs are subject to epistemic uncertainty (e.g., expert assessments), for which fuzzy numbers offer a natural representation. For instance, travel times between cities may be imprecise due to road or weather conditions, so edge costs can be modeled as fuzzy numbers. This study focuses solely on fuzzy \emph{edge costs}, assuming a known and fixed network topology.

Parametric families such as triangular, trapezoidal, and Gaussian retain their functional form under basic operations (e.g., translation, scaling, and, under mild conditions, addition) due to their parametric closure. Generalized fuzzy numbers are adopted to model epistemic uncertainty, especially when expert assessments may be only partially reliable, that is, when the normality condition is relaxed. In a possibilistic reading, the height of a subnormal fuzzy number acts as a degree of consistency (information reliability) of the underlying evidence \cite{dubois1994possibilistic}. Generalized forms are therefore crucial in real--world settings where the available evidence may be incomplete or partially inconsistent. Gaussian forms, in particular, are attractive in both theory and practice thanks to their smooth analytical structure, compact parametrization, and closed--form $\alpha$--cuts, which facilitate arithmetic and ranking.

Despite their theoretical advantages, subnormal Gaussian membership functions have received limited attention in the literature. Existing treatments typically rely on the extension principle and $\alpha$--cuts; the height parameter is carried along but is rarely \emph{aggregated} within arithmetic operations. When explicit height aggregation is attempted, it typically appears in intuitionistic or picture fuzzy frameworks, whereas in generalized Type--1 settings the combination of membership information is usually delegated to t--norm operators (most often the minimum) rather than to a dedicated height--aggregation rule. In Type--1 settings, height has more often been incorporated at the \emph{ranking} stage. For triangular/trapezoidal fuzzy numbers, Aydin and Akdemir~\cite{aydin-akdemir-2025}--in a proceedings study on generalized trapezoidal fuzzy transportation problems--aggregate heights via a compensatory, weighted harmonic mean with spread--proportional (non--normalized) weights; while the aggregate remains within the input bounds, associativity under repeated addition is not preserved. Within interval Type--2 fuzzy sets, a more recent proposal~\cite{fan2025height} defined unified height update rules for several arithmetic operators and introduced a tuning parameter to encode risk attitudes (from cautious to adventurous). For Gaussian fuzzy numbers specifically, Sen et al.~\cite{sen2021similarity} developed a similarity measure that jointly accounts for mean, standard deviation, and height via an exponential formulation. Tahayori et al.~\cite{tahayori2013shadowed} proposed a construction of shadowed sets from fuzzy sets using both normal and subnormal Gaussian memberships; their method preserves the overall amount of fuzziness via gradual numbers and provides closed--form threshold formulas, ensuring consistency in the representation of fuzzy information.

Despite extensive research on fuzzy SPPs--most of which adapt classical algorithms or employ metaheuristics and rely on triangular or trapezoidal models--studies that relax the normality assumption and explicitly model Gaussian membership functions in Type-V fuzzy graphs remain scarce. For instance, Yao and Lin~\cite{yao2003fuzzy} proposed two fuzzy shortest path models in which edge costs are represented by level$-\lambda$ triangular fuzzy numbers and level$-(1-\beta, 1-\alpha)$ interval-valued fuzzy numbers. The problem is solved using a signed-distance-based ranking method combined with dynamic programming. Elizabeth and Sujatha~\cite{elizabeth2012lambda} introduced novel algorithms for the fuzzy SPP where edge costs are represented as level$-\lambda$ triangular $LR$ fuzzy numbers--an approximation that implicitly captures subnormal behavior by allowing a reduced maximum membership grade. Kavian et al.~\cite{kavian2012twostage} proposed a two-stage fuzzy logic-based scheme to accommodate traffic demand uncertainty in optical core networks. Their approach models future demand fluctuations using Gaussian fuzzy membership functions and integrates a Mamdani-type inference system with Dijkstra-based routing to assign link costs adaptively according to bandwidth availability and uncertainty levels. Hassanzadeh et al.~\cite{hassanzadeh2013genetic} proposed a fuzzy shortest path algorithm that integrates both trapezoidal and Gaussian fuzzy edge lengths, all assumed to be normal (i.e., with maximum membership grade equal to one). Their approach utilizes $\alpha-$cut representations and a least squares-based approximation to construct membership functions for path costs, and applies a genetic algorithm to efficiently solve large-scale instances. Verma and Shukla~\cite{verma2013greedy} proposed a greedy algorithm for the fuzzy SPP, in which they employed quasi-Gaussian fuzzy numbers as edge costs. Their motivation for using Gaussian-shaped membership functions lies in their continuity, differentiability, and compact parametric representation, offering advantages over traditional triangular or trapezoidal fuzzy models. Anusuya and Kavitha~\cite{anusuya2015genetic} proposed a genetic algorithm for the fuzzy SPP, where edge costs are represented by generalized trapezoidal fuzzy numbers. Their model emphasizes the role of each genetic operator individually and incorporates a distance-based fitness function that accounts for fuzziness-related parameters such as mode, divergence, and spreads. Kumar et al.~\cite{kumar2019shortest} proposed a unified fuzzy linear programming approach for solving fuzzy SPPs involving both Type-I and Type-II weighted triangular and trapezoidal fuzzy numbers. Their method enables the prediction of the crisp shortest path length while preserving the original uncertainty structure through non-normal fuzzy membership functions. Dey et al.~\cite{dey2021imprecise} proposed a fuzzy extension of Dijkstra's algorithm to solve the SPP in imprecise environments, where edge lengths are represented by generalized trapezoidal fuzzy numbers. Their approach incorporates a parameterized defuzzification method based on $\lambda$-integral values, allowing decision-makers to adjust the solution based on varying optimism levels. Valdes et al.~\cite{valdes2021fuzzy} modeled communication networks as Type-V fuzzy graphs with link costs represented by normalized triangular fuzzy numbers. They proposed a Fuzzy Dijkstra Algorithm supported by $\alpha-$cut arithmetic and a total integral-based ranking method to compute the shortest path under uncertainty, demonstrating its superiority over classical approaches in simulated network topologies. Ebrahimnejad et al.~\cite{ebrahimnejad2021mabc} addressed the SPP under mixed interval-valued fuzzy numbers, developed an $\alpha$--cut--based aggregation scheme and an extended distance function for path comparison, and devised a modified artificial bee colony algorithm to compute the interval-valued membership of the fuzzy shortest path. Biswal et al.~\cite{biswal2022alpha} formulated the $\alpha$-reliable SPP for uncertain time-dependent road networks within uncertainty theory: link travel times were modeled as independent uncertain normal variables, an additive $\alpha$-reliable path length was derived--thus restoring Bellman's principle--and a Dijkstra-type algorithm was implemented and illustrated with a numerical example.
Abbasi and Allahviranloo~\cite{abbasi2023realistic} proposed a novel fuzzy critical path methodology in project scheduling, where activity durations are represented by generalized quasi-geometric fuzzy numbers--specifically, generalized rational fuzzy numbers with subnormal heights. To enable more realistic modeling and reduce computational burden, the authors introduced TA-based fuzzy arithmetic and a new distance-based ranking approach, which were validated through a case study involving airport cargo ground operations. Akdemir et al.~\cite{akdemir2024mincost} addressed minimum cost flow problems under uncertainty by employing trapezoidal fuzzy numbers with subnormal membership functions. Their credibilistic optimization framework utilizes CVaR minimization to obtain risk-averse solutions, avoiding the information loss caused by normalization and supporting arithmetic operations that preserve shape and height. Pratibha and Dangwal~\cite{pratibha2024leastcost} investigated the fuzzy least cost path problem within supply chain networks by modeling uncertain edge costs as generalized hexagonal fuzzy numbers. Their approach integrates a ranking-based defuzzification method with fuzzy dynamic programming to compute optimal routes under uncertainty without requiring normalization, thereby preserving subnormal membership characteristics in the fuzzy input.

Solving the fuzzy SPP--whether pursued algorithmically (e.g., dynamic programming, Dijkstra's or Bellman--Ford) or via a linear programming formulation--requires well-defined operations for addition, scalar multiplication, and ordering of fuzzy edge costs. The objective of this study is to develop a path-additive defuzzification scheme that is applicable to network optimization problems and compatible with both Gaussian and generalized fuzzy frameworks.

In this study, we adopt an $\alpha$-cut--based formulation for aggregating heights in the addition of generalized Gaussian fuzzy numbers (GGFNs). Specifically, the resulting height is estimated as a weighted geometric mean of the input heights, with weights derived from the standard deviations of the input fuzzy numbers. This ensures that the resulting height remains bounded between the minimum and maximum input heights and that the addition operator is associative. To our knowledge, this is the first approach to use a weighted geometric mean for approximating height in GGFN addition, offering a new perspective in fuzzy arithmetic. Related work on height aggregation and Gaussian memberships includes Akdemir~\cite{akdemir2025ggpfns}, which developed generalized picture fuzzy numbers with Gaussian memberships, proposed triangular approximations, and introduced a compensatory height update for decision problems. Distinct from this line of work, the present paper targets shortest paths with Type--1 GGFNs and introduces an $\alpha$--cut-based, $\sigma$--weighted height aggregation together with a risk--averse ranking index.

Building on this formulation and its implications, the present paper makes the following contributions:
\begin{itemize}
	\item We introduce an $\alpha$--cut--based addition for Type--1 GGFNs that aggregates heights via a $\sigma$--weighted geometric mean. The resulting height stays within input bounds and the addition is associative.
	\item We define risk--averse ranking indices $R^{\mathrm{benefit}}$ and $R^{\mathrm{cost}}$ that couple the mean--like core with a reliability--modulated risk term; for normal fuzzy sets, they reduce to a risk--neutral ranking. We prove the desired monotonicity properties.
	\item Within a single, path--additive scheme, we balance three objectives--minimizing cost, maximizing height (information reliability), and minimizing fuzziness/ambiguity (dispersion $\sigma$ modulated by $h$)--via the proposed $\sigma$--weighted height aggregation and risk--averse ranking; on directed test networks, this yields routes that remain cost--efficient under perturbations while systematically prioritizing high--height edges, thereby making the cost--reliability--ambiguity trade--off explicit in practice.
	\item For robustness analysis, we \emph{adopt} the single-value (fuzzy-numerical) simulation of Chanas and Nowakowski \cite{chanas1988single}--sampling a membership level uniformly and then drawing a cost uniformly from the corresponding $\alpha$--cut--to redraw edge costs and measure sensitivity via the mean percentage deviation and its standard deviation. 
    \item We provide large-scale network experiments that demonstrate the method’s scalability—i.e., practical time/memory growth with problem size—on the FAA network, together with systematic deviation profiles across reliability regimes obtained via single-value $\alpha$-cut simulations.
% If we were to add BAY example 
%	\item We provide large-scale network experiments that demonstrate the method’s scalability—i.e., practical time/memory growth with problem size—on the FAA network (and, in code, on a much larger road network), together with systematic deviation profiles across reliability regimes obtained via single-value $\alpha$-cut simulations on FAA.
\end{itemize}

The remainder of the paper is organized as follows:~\ref{secPre} reviews preliminaries on fuzzy numbers and $\alpha$--cuts, outlines the sampling procedure used for error analysis, and formulates fuzzy SPPs on generalized and Gaussian networks.~\ref{secAddition} presents the adopted $\alpha$--cut--based approach to GGFN addition, introduces the height aggregation rule, and verifies associativity on a small illustrative example (the routine inductive proof is omitted for brevity).~\ref{secRanking} defines the risk--averse ranking indices and establishes their monotonicity properties.~\ref{secExperiments} presents a small-scale numerical illustration of the framework.~\ref{secLarge} details large-scale network tests, confirming computational scalability—i.e., practical time and memory growth with problem size—and illustrating how deviation behavior evolves under varying reliability regimes.~\ref{secConclusion} concludes the paper and highlights the directions for future work.

\section{Preliminaries and Problem Setting}\label{secPre}
The main terminology and basic concepts needed for the discussion are as follows. 

\begin{definition}
	A fuzzy set $\widetilde{A}$ on a universe of discourse $X$ can be expressed as
	\begin{eqnarray*}
		\widetilde{A}=\left\{ \left. \langle x,\mu_{\widetilde{A}}(x)\rangle \,\right|\, x\in X \right\},
	\end{eqnarray*}
	where $\mu_{\widetilde{A}}:X\to[0,1]$ represents the degree to which an element $x$ belongs to $\widetilde{A}$.
\end{definition}

\begin{definition}\cite{chen1985ranking}
	Let $p,q,r\in\mathbb{R}, where $ $p<q<r,$ and $h \in\left(
	0,1\right].$ A generalized fuzzy number $\widetilde{A}$ is a fuzzy subset of the real line
	$\mathbb{R},$ whose membership function $\mu_{\widetilde{A}}$ satisfies the
	following conditions
	\begin{description}
		\item[(i)] $\mu_{\widetilde{A}}:
		\mathbb{R}
		\rightarrow\left[  0,h\right]  $ is continuous,
		\item[(ii)] $\mu_{\widetilde{A}}\left(  x\right)  =0$ for $x<p$ or $x>r$,
		\item[(iii)] $\mu_{\widetilde{A}}\left(  x\right)  $ is strictly increasing on
		$\left[  p,q\right]  $ and strictly decreasing on $\left[  q,r\right]  $,
		\item[(iv)] $\mu_{\widetilde{A}}\left(  q\right)  =h$.
	\end{description}
\end{definition}

\begin{remark}
	If normality holds (i.e., $h=1$) \emph{and} the membership function is convex, then $\widetilde{A}$ is a fuzzy number. Otherwise, when $h\in(0,1)$, $\widetilde{A}$ is \emph{subnormal} with maximal membership (height) $h$. In a possibilistic reading, $h$ can be interpreted as a \emph{degree of consistency} of the underlying evidence; hence $h<1$ indicates \emph{partially inconsistent} information and a non--normalized possibility distribution \cite{dubois1994possibilistic}.
\end{remark}

\begin{definition}
	The $\alpha-$level set of $\widetilde{A}$ is defined as 
	\begin{align*}
		\widetilde{A}_\alpha &=\left\{  x\in \mathbb{R} \vert \mu_{\widetilde{A}}\left(
		x\right) \geq \alpha, \alpha \in [0,h] \right\}\\
		&= [\widetilde{A}_L\left(  \alpha \right),\widetilde{A}_R\left(  \alpha \right)],
	\end{align*} 
	where $\widetilde{A}_L(\alpha)$ and $\widetilde{A}_R(\alpha)$ denote the inverse functions of the left and right membership functions, respectively.
\end{definition}

\begin{definition}\cite{sen2021similarity} 
	The membership function of a GGFN, denoted by $\widetilde{A}=\langle(c_{\widetilde{A}},\sigma
	_{\widetilde{A}});h_{\widetilde{A}} \rangle$, is defined as
	\begin{equation*}
		\mu_{\widetilde{A}}(x) = h_{\widetilde{A}} \exp \left( -\tfrac{1}{2} \left(\tfrac{x-c_{\widetilde{A}}}{\sigma_{\widetilde{A}}}\right)^2 \right),
	\end{equation*}
	where $0 < h_{\widetilde{A}} \leq 1$, $c_{\widetilde{A}}$ is the center (core, or $h_{\widetilde{A}}-$level set), and the parameter $\sigma_{\widetilde{A}}>0$ describes the spread around the central value $c_{\widetilde{A}}$. The center value $c_{\widetilde{A}}$ can be interpreted as the most possible value, with a maximum membership degree of $h_{\widetilde{A}}$.
\end{definition}

\begin{remark}
	According to the definitions in \cite{stoklasa2022relationship}, the possibilistic mean value and variance are calculated as $c_{\widetilde{A}}$ and $\sigma_{\widetilde{A}}^2$, respectively. These quantities are independent of the height and are therefore computed in the same way for fuzzy numbers. The parameter $\sigma_{\widetilde{A}}$ is referred to as the standard deviation. 
\end{remark}

\begin{remark}
\label{rem:fuzzynum}
    For a GGFN $\widetilde{A}=\langle (c_{\widetilde{A}},\sigma_{\widetilde{A}});h_{\widetilde{A}} \rangle$, the left and right inverse functions are
	\[
	\widetilde{A}_{L}(\alpha)=c_{\widetilde{A}}-\sigma_{\widetilde{A}}\sqrt{-2\ln\!\left(\tfrac{\alpha}{h_{\widetilde{A}}}\right)},\qquad
	\widetilde{A}_{R}(\alpha)=c_{\widetilde{A}}+\sigma_{\widetilde{A}}\sqrt{-2\ln\!\left(\tfrac{\alpha}{h_{\widetilde{A}}}\right)}.
	\]
	For any $\alpha\in(0,h_{\widetilde{A}}]$, the $\alpha$-level set of $\widetilde{A}$ is the closed interval $[\widetilde{A}_{L}(\alpha),\,\widetilde{A}_{R}(\alpha)]$.
\end{remark}

\subsection{Sampling scheme for error analysis}
These representations capture the lower and upper extremal values of each GGFN interval $\alpha$-cut. In the simulation procedure, we \emph{adopt} the single-value (fuzzy-numerical) simulation of Chanas and Nowakowski~\cite{chanas1988single}, adapting their Procedure~1 to our setting: we first draw a membership level uniformly and then draw a value uniformly from the induced cut, which yields a single crisp realization of a GGFN. Concretely, a uniformly distributed random number $u$ is generated over an appropriate interval, followed by the generation of another random number $v$ from the uniform distribution on the $u$-cut interval. The weighted arithmetic mean of the endpoints of the $u$-cut interval yields a sample of $\widetilde{A}$ as follows
\begin{equation}
	\hat{A}= v\,\widetilde{A}_{L}(u) + (1-v)\,\widetilde{A}_{R}(u),\label{simu}
\end{equation}
where $u\sim U(0,h_{\widetilde{A}})$ and $v\sim U(0,1)$ are independent. This single-value scheme is then used in a Monte Carlo fashion to redraw edge costs for robustness/error analysis.

\subsection{Fuzzy SPP on generalized Gaussian networks}

A \textit{Type--V fuzzy graph} is defined on known vertices and edges, but the edge costs involve uncertainty. Consider an acyclic directed network $\widetilde{G}=(V,E,\widetilde{\mathcal{C}})$, where $V$ is the set of vertices, $E$ is the set of edges, and $\widetilde{\mathcal{C}}$ denotes the admissible set of fuzzy edge costs. Each edge $(i,j)\in E$ has a nonnegative generalized Gaussian fuzzy cost $\widetilde A_{ij}=\langle(c_{ij},\sigma_{ij});\,h_{ij}\rangle$ with $\alpha$--cut $[\widetilde{A}_{ijL}(\alpha),\widetilde{A}_{ijR}(\alpha)]$, where $\widetilde{A}_{ijL}(\alpha)=c_{ij}-\sigma_{ij}\sqrt{-2\ln(\alpha/h_{ij})}$. We impose nonnegativity of the left endpoints at the evaluation levels used in this paper, i.e., there exists $\alpha_\ast\in(0,1]$ such that $\widetilde{A}_{ijL}(\alpha)\ge 0$ for all $\alpha\in[\alpha_\ast,h_{ij}]$ and all $(i,j)\in E$. A sufficient condition is
$\ \sigma_{ij}\le c_{ij}\big/\sqrt{-2\ln(\alpha_\ast/h_{ij})}\ $ for every edge.\\

We can formulate the fuzzy SPP in the following linear programming form:

\begin{align}
	\min \ & \widetilde{f}(x)\;=\;\sum_{(i,j)\in E} \widetilde{A}_{ij}\,x_{ij} \tag{2}\\[2mm]
	\text{s.t.}\quad 
	& \sum_{j} x_{ij}\;-\;\sum_{j} x_{ji}
	\;=\;
	\begin{cases}
		1, & \text{if } i=s,\\
		0, & \text{if } i\neq s,t\ (i=1,\ldots,n),\\
		-1, & \text{if } i=t,
	\end{cases} \tag{3}\\[2mm]
	& x_{ij}\in\{0,1\}\qquad \forall\,(i,j)\in E \tag{4}
\end{align}
with $s$ and $t$ denoting the source and terminal nodes, respectively.

\section{An $\alpha$--Cut--Based Approach to GGFN Addition: Height Aggregation and Associativity}\label{secAddition}

%\label{arop}
We now present a novel $\alpha$--cut--based approach to generalized fuzzy addition, determining the resulting height via numerical approximation. In this scheme, the height is computed as a weighted geometric mean of the input heights, with weights derived from the standard deviations of the corresponding fuzzy numbers. Under the proposed weighting, the addition operator is associative.

When approximation is used, the corresponding formulation implicitly assumes that the arithmetic and geometric means yield sufficiently close values. Accordingly, the approximation performs better when $h_{\widetilde{A}}$ is close to $h_{\widetilde{B}}$.

The use of weighted averaging in determining the resultant height ensures that the value lies between the minimum and maximum of the input heights. From an information reliability perspective, this reflects a compensatory approach that neither overestimates nor underrepresents the influence of each fuzzy component.

\begin{definition}
	\label{DefArOp}  Let $\widetilde{A}=\langle(c_{\widetilde{A}},\sigma
	_{\widetilde{A}});h_{\widetilde{A}} \rangle$ and  $\widetilde{B}%
	=\langle(c_{\widetilde{B}},\sigma_{\widetilde{B}});h_{\widetilde{B}} \rangle$
	be two independent GGFNs and let $k \in\mathbb{R}$. Then, the following
	operations hold.
	
	\begin{itemize}

		\item Addition of two GGFNs
				
		\begin{align}
			\widetilde{A} + \widetilde{B} 
			= \left\langle 
			\left( c_{\widetilde{A}} + c_{\widetilde{B}}, \, \sigma_{\widetilde{A}} + \sigma_{\widetilde{B}} \right);
			\, h_{\widetilde{A}}^{\frac{\sigma_{\widetilde{A}}}{\sigma_{\widetilde{A}} + \sigma_{\widetilde{B}}}} \cdot
			h_{\widetilde{B}}^{\frac{\sigma_{\widetilde{B}}}{\sigma_{\widetilde{A}} + \sigma_{\widetilde{B}}}}
			\right\rangle \label{eq:addition}
		\end{align}

		\item Scalar Multiplication
		\[
		k\widetilde{A}=\left\langle \left(  kc_{\widetilde{A}},\left\vert k\right\vert
		\sigma_{\widetilde{A}}\right)  ;h_{\widetilde{A}}\right\rangle .
		\]
		
	\end{itemize}
\end{definition}

\begin{remark}
		We analyze the additivity of GGFNs using their a-cut representations.
		Consider two GGFNs 
	$\widetilde{A}=\langle (c_{\widetilde{A}},\sigma_{\widetilde{A}});h_{\widetilde{A}}\rangle$ 
	and 
	$\widetilde{B}=\langle (c_{\widetilde{B}},\sigma_{\widetilde{B}});h_{\widetilde{B}}\rangle$. 
	Under the assumption that the Gaussian form is preserved, the corresponding $\alpha$-cuts are given by
	\begin{align*}
		\widetilde{A}_{\alpha} &  =\left[  c_{\widetilde{A}}-\sigma_{\widetilde{A}}
		\sqrt{-2\ln\!\left(  \tfrac{\alpha}{h_{\widetilde{A}}}\right)  },
		\;c_{\widetilde{A}}+\sigma_{\widetilde{A}}\sqrt{-2\ln\!\left(  \tfrac{\alpha}{h_{\widetilde{A}}}\right)  }\right],
		\quad \alpha\in(0,h_{\widetilde{A}}],\\
		\widetilde{B}_{\alpha} &  =\left[  c_{\widetilde{B}}-\sigma_{\widetilde{B}}
		\sqrt{-2\ln\!\left(  \tfrac{\alpha}{h_{\widetilde{B}}}\right)  },
		\;c_{\widetilde{B}}+\sigma_{\widetilde{B}}\sqrt{-2\ln\!\left(  \tfrac{\alpha}{h_{\widetilde{B}}}\right)  }\right],
		\quad \alpha\in(0,h_{\widetilde{B}}],\\
		(\widetilde{A}+\widetilde{B})_{\alpha}
		& =\Big[ c_{\widetilde{A}}+c_{\widetilde{B}}
		-(\sigma_{\widetilde{A}}+\sigma_{\widetilde{B}})
		\sqrt{-2\ln\!\left(\tfrac{\alpha}{h}\right)}, \;
		c_{\widetilde{A}}+c_{\widetilde{B}}
		+(\sigma_{\widetilde{A}}+\sigma_{\widetilde{B}})
		\sqrt{-2\ln\!\left(\tfrac{\alpha}{h}\right)}\Big]
	\end{align*}
where $h$ denotes the unknown height, with $\alpha \in (0,h]$. According to interval arithmetic, the following identity should hold:
	\begin{align*}
		\sigma_{\widetilde{A}}\sqrt{-2\ln\!\left(  \tfrac{\alpha}{h_{\widetilde{A}}}\right)}
		+\sigma_{\widetilde{B}}\sqrt{-2\ln\!\left(  \tfrac{\alpha}{h_{\widetilde{B}}}\right)}
		=(\sigma_{\widetilde{A}}+\sigma_{\widetilde{B}})
		\sqrt{-2\ln\!\left(  \tfrac{\alpha}{h}\right)}.
	\end{align*}
	
	Since enforcing the above equality yields a complicated expression in $\alpha$ in which $\alpha$ does not cancel, we introduce an approximation. Specifically, invoking the inequality between the weighted arithmetic mean and the weighted geometric mean, we approximate the weighted geometric mean by its weighted arithmetic counterpart as follows:
		
	\begin{align*}
		\sqrt{\ln\!\left(\tfrac{h_{\widetilde{A}}}{\alpha}\right)
			\ln\!\left(\tfrac{h_{\widetilde{B}}}{\alpha}\right)}
		\;\simeq\;
		\tfrac{1}{2}\Big(\ln\!\left(\tfrac{h_{\widetilde{A}}}{\alpha}\right)
		+\ln\!\left(\tfrac{h_{\widetilde{B}}}{\alpha}\right)\Big).
	\end{align*}
	This approximation leads to a tractable expression for the height parameter
	\begin{equation*}
		h=h_{\widetilde{A}}^{\frac{\sigma_{\widetilde{A}}}{\sigma_{\widetilde{A}} + \sigma_{\widetilde{B}}}} \cdot
		h_{\widetilde{B}}^{\frac{\sigma_{\widetilde{B}}}{\sigma_{\widetilde{A}} + \sigma_{\widetilde{B}}}}
	\end{equation*} 
	which forms the basis of our approach.
\end{remark}

\begin{remark}
	A similar $\alpha$-cut-based strategy can be applied to symmetric triangular or trapezoidal generalized fuzzy numbers, using a weighted (not necessarily normalized) harmonic mean to aggregate heights. In this case, the weights are taken proportional to the spreads and are not normalized to sum to one. Under this weighting, the resulting addition operator is not associative. For further details, see~\cite{aydin-akdemir-2025}.
\end{remark}

\section{Risk--Averse Ranking Indices: Definition and Monotonicity}\label{secRanking}
In line with the adopted aggregation rule, we introduce the ranking indices $R^{\mathrm{benefit}}$ and $R^{\mathrm{cost}}$; their precise forms are given below. Throughout, $\log$ denotes the base-10 logarithm.

\begin{definition}
	Let $\widetilde{A}=\langle(c_{\widetilde{A}},\sigma_{\widetilde{A}});\,h_{\widetilde{A}}\rangle$ be a GGFN with $c_{\widetilde{A}}>0$, $\sigma_{\widetilde{A}}>0$, and $h_{\widetilde{A}}\in(0,1]$.
	The \emph{benefit-type} and \emph{cost-type} ranking indices are defined, respectively, by
	\[
	R^{\mathrm{benefit}}(\widetilde{A})
	= c_{\widetilde{A}}+\sigma_{\widetilde{A}}\log\!\big(h_{\widetilde{A}}\big),
	\qquad
	R^{\mathrm{cost}}(\widetilde{A})
	= c_{\widetilde{A}}-\sigma_{\widetilde{A}}\log\!\big(h_{\widetilde{A}}\big).
	\]
\end{definition}
\begin{theorem}
	Let $\widetilde{A}=\langle (c_{\widetilde{A}},\sigma_{\widetilde{A}});\;h_{\widetilde{A}}\rangle$ and $\widetilde{B}=\langle (c_{\widetilde{B}},\sigma_{\widetilde{B}});\;h_{\widetilde{B}}\rangle$ denote two independent cost-type GGFNs. Assume $c_{\widetilde{A}},\,c_{\widetilde{B}},\,k>0$. Under the proposed aggregation scheme, the cost-type ranking index $R^{\mathrm{cost}}$ satisfies:
	\begin{description}
		\item[(i)] $R^{\mathrm{cost}}(\widetilde{A} + \widetilde{B}) = R^{\mathrm{cost}}(\widetilde{A}) + R^{\mathrm{cost}}(\widetilde{B})$,
		\item[(ii)] $R^{\mathrm{cost}}(k \widetilde{A}) = k \, R^{\mathrm{cost}}(\widetilde{A})$.
	\end{description}
\end{theorem}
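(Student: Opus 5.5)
The proof is a direct substitution of the operations of Definition~\ref{DefArOp} into the definition of $R^{\mathrm{cost}}$. The key observation is that the $\sigma$--weighted geometric mean used for the height is exactly what makes the term $\sigma\log h$ additive. Once the logarithm is applied, the exponent weights $\sigma_{\widetilde{A}}/(\sigma_{\widetilde{A}}+\sigma_{\widetilde{B}})$ and $\sigma_{\widetilde{B}}/(\sigma_{\widetilde{A}}+\sigma_{\widetilde{B}})$ cancel against the aggregated spread $\sigma_{\widetilde{A}}+\sigma_{\widetilde{B}}$.

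For (i), let $h=h_{\widetilde{A}}^{\sigma_{\widetilde{A}}/(\sigma_{\widetilde{A}}+\sigma_{\widetilde{B}})}\,h_{\widetilde{B}}^{\sigma_{\widetilde{B}}/(\sigma_{\widetilde{A}}+\sigma_{\widetilde{B}})}$ be the height of $\widetilde{A}+\widetilde{B}$ from \eqref{eq:addition}. First I check that $\widetilde{A}+\widetilde{B}$ is an admissible GGFN for the index. Its core $c_{\widetilde{A}}+c_{\widetilde{B}}$ is positive, its spread $\sigma_{\widetilde{A}}+\sigma_{\widetilde{B}}$ is positive, and $h\in(0,1]$ because it is a weighted geometric mean of numbers in $(0,1]$. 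Next I use the base-10 logarithm rules $\log(xy)=\log x+\log y$ and $\log x^{p}=p\log x$ to get
\[
(\sigma_{\widetilde{A}}+\sigma_{\widetilde{B}})\log h=\sigma_{\widetilde{A}}\log h_{\widetilde{A}}+\sigma_{\widetilde{B}}\log h_{\widetilde{B}}.
\]
Then
\[
R^{\mathrm{cost}}(\widetilde{A}+\widetilde{B})=(c_{\widetilde{A}}+c_{\widetilde{B}})-(\sigma_{\widetilde{A}}+\sigma_{\widetilde{B}})\log h,
\]
and regrouping the right-hand side gives $R^{\mathrm{cost}}(\widetilde{A})+R^{\mathrm{cost}}(\widetilde{B})$.

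For (ii), take $k>0$. By the scalar multiplication rule, $k\widetilde{A}=\langle(kc_{\widetilde{A}},|k|\sigma_{\widetilde{A}});h_{\widetilde{A}}\rangle$, and $|k|=k$. The core $kc_{\widetilde{A}}$ and spread $k\sigma_{\widetilde{A}}$ are positive and the height is unchanged, so the index applies. Therefore
\[
R^{\mathrm{cost}}(k\widetilde{A})=kc_{\widetilde{A}}-k\sigma_{\widetilde{A}}\log h_{\widetilde{A}}=k\,R^{\mathrm{cost}}(\widetilde{A}).
\]

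There is no real obstacle. The only points needing care are that each resulting object satisfies the domain hypotheses of the ranking definition, and that $k>0$ is essential in (ii). For $k<0$ the spread would scale by $|k|$ while the core scales by $k$, and homogeneity would fail. I would also remark that, by induction using (i) and the associativity of the addition, $R^{\mathrm{cost}}$ is additive along any path, which justifies the Dijkstra-type relaxation used later.
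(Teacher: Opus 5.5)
Your proposal is correct and matches the paper's proof essentially line by line. For (i) you substitute the addition rule and split the logarithm of the $\sigma$-weighted geometric mean so that the weights cancel against $\sigma_{\widetilde{A}}+\sigma_{\widetilde{B}}$, then regroup; for (ii) you use $|k|=k$ with the height unchanged. Your extra checks that $\widetilde{A}+\widetilde{B}$ and $k\widetilde{A}$ stay in the domain of $R^{\mathrm{cost}}$ are a welcome detail that the paper leaves implicit.
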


\begin{proof}
	The proofs below are provided for cost-type variables; similar reasoning applies to the benefit-type index.
	
	\begin{description}
		\item[(i)]
		\begin{align*}
			R^{\mathrm{cost}}(\widetilde{A}+\widetilde{B})
			&= c_{\widetilde{A}+\widetilde{B}}
			- \sigma_{\widetilde{A}+\widetilde{B}}\cdot
			\log\!\left(h_{\widetilde{A}+\widetilde{B}}\right) \\
			&= (c_{\widetilde{A}}+c_{\widetilde{B}})
			- (\sigma_{\widetilde{A}}+\sigma_{\widetilde{B}})\cdot
			\log\!\left(
			h_{\widetilde{A}}^{\frac{\sigma_{\widetilde{A}}}{\sigma_{\widetilde{A}}+\sigma_{\widetilde{B}}}}
			\,h_{\widetilde{B}}^{\frac{\sigma_{\widetilde{B}}}{\sigma_{\widetilde{A}}+\sigma_{\widetilde{B}}}}
			\right) \\
			&= (c_{\widetilde{A}}+c_{\widetilde{B}})
			- (\sigma_{\widetilde{A}}+\sigma_{\widetilde{B}})
			\left[
			\frac{\sigma_{\widetilde{A}}}{\sigma_{\widetilde{A}}+\sigma_{\widetilde{B}}}\log h_{\widetilde{A}}
			+ \frac{\sigma_{\widetilde{B}}}{\sigma_{\widetilde{A}}+\sigma_{\widetilde{B}}}\log h_{\widetilde{B}}
			\right] \\
			&= \big(c_{\widetilde{A}}-\sigma_{\widetilde{A}}\log h_{\widetilde{A}}\big)
			+ \big(c_{\widetilde{B}}-\sigma_{\widetilde{B}}\log h_{\widetilde{B}}\big) \\
			&= R^{\mathrm{cost}}(\widetilde{A}) + R^{\mathrm{cost}}(\widetilde{B}).
		\end{align*}
		
		\item[(ii)]
		\begin{align*}
			R^{\mathrm{cost}}(k\widetilde{A})
			&= c_{k\widetilde{A}} - \sigma_{k\widetilde{A}}\cdot \log\!\left(h_{k\widetilde{A}}\right) \\
			&= k c_{\widetilde{A}} - k \sigma_{\widetilde{A}}\cdot \log\!\left(h_{\widetilde{A}}\right)
			\qquad (\text{since }k>0) \\
			&= k\,R^{\mathrm{cost}}(\widetilde{A}).
		\end{align*}
	\end{description}
\end{proof}

\begin{remark}
	In defining the ranking functions, their monotonicity can be verified by analyzing partial derivatives with respect to the underlying parameters. Assume $c>0$, $\sigma>0$, and $h\in(0,1]$.
	
	\medskip
	\noindent\textbf{Benefit-type variable.} Let $R^{\mathrm{benefit}}(\langle (c,\sigma);\;h\rangle)=c+\sigma\log(h)$. Then
	\[
	\frac{\partial R^{\mathrm{benefit}}}{\partial c}=1>0,\qquad
	\frac{\partial R^{\mathrm{benefit}}}{\partial h}=\frac{\sigma}{h\ln10}>0,\qquad
	\frac{\partial R^{\mathrm{benefit}}}{\partial \sigma}=\log(h)\le 0.
	\]

    In the benefit-type index, higher values of $c$ and $h$ improve the score, while volatility ($\sigma$) reduces it. Conversely, for the cost-type index, $c$ and $\sigma$ act as penalties, while $h$ mitigates the cost, reflecting the preference for high-reliability information.
    	
	\medskip
	\noindent\textbf{Cost-type variable.} Let $R^{\mathrm{cost}}(\langle (c,\sigma);\;h\rangle)=c-\sigma\log(h)$ and interpret lower values as more preferred (``less is better''). Then
	\[
	\frac{\partial R^{\mathrm{cost}}}{\partial c}=1>0,\qquad
	\frac{\partial R^{\mathrm{cost}}}{\partial h}=-\frac{\sigma}{h\ln10}<0,\qquad
	\frac{\partial R^{\mathrm{cost}}}{\partial \sigma}=-\log(h)\ge 0.
	\]
	Hence, larger $c$ and $\sigma$ worsen (increase) the index, whereas larger $h$ improves (decreases) it.
	
	\medskip
	In summary, under $c>0$, $\sigma>0$, and $h\in(0,1]$, both $R^{\mathrm{benefit}}$ and $R^{\mathrm{cost}}$ exhibit the desired monotonicity, reflecting the principles that ``more is better'' for benefits and ``less is better'' for costs. Conceptually, the indices are \emph{mean--risk} measures with a risk-averse orientation: they couple the mean-like core $c$ with a risk adjustment that scales with $\sigma$ and whose magnitude is controlled by $|\log h|$. For normal fuzzy sets ($h=1$), the adjustment vanishes and the index is risk-neutral, yielding $R^{\mathrm{benefit}}=R^{\mathrm{cost}}=c$. As reliability decreases ($h\downarrow$) or variability increases ($\sigma\uparrow$), the penalty magnitude $\sigma|\log h|$ grows, pushing the score in a less preferred direction--\emph{increasing} for cost-type and \emph{decreasing} for benefit-type. In other words, from a risk-averse (pessimistic) perspective--and relative to risk-neutral defuzzification based solely on the core $c$--cost-type fuzzy quantities are defuzzified to larger crisp values, whereas benefit-type quantities are crispified to smaller ones.
    
    The cost-type index can be generalized by introducing a risk--attitude parameter $\kappa$ as
\begin{equation}
  R^{\mathrm{cost}}(\langle (c,\sigma);\;h\rangle,\kappa)
  =
  c - \kappa\,\sigma \log(h),
  \label{eq:ranking_cost}
\end{equation}
where $\kappa \ge 0$. The baseline specification is recovered by setting $\kappa = 1$, while larger values of $\kappa$ penalize alternatives with smaller heights $h$, even if they have slightly lower cores.
\end{remark}

We illustrate the height aggregation rule and the ranking index in~\ref{exampl1} (for both benefit- and cost-type cases) and, in this instance, confirm that the resulting addition operator is associative; a general proof by induction is straightforward and is left to the reader.

\begin{example}\label{exampl1}
~\ref{tab:example1} reports the core, standard deviation, and height values for 
	$\widetilde{A}=\langle(15,3);\,0.6\rangle$, 
	$\widetilde{B}=\langle(5,1);\,0.7\rangle$, 
	$\widetilde{C}=\langle(5,1);\,0.9\rangle$ 
	and their combinations under the adopted approach, together with the corresponding ranking scores for both benefit-type and cost-type settings. The ranking results demonstrate that:
	\[
	\widetilde{B} \prec \widetilde{C}, \quad 
	\widetilde{A}+\widetilde{B} \prec \widetilde{A}+\widetilde{C}, \quad 
	2\widetilde{B} \prec \widetilde{B}+\widetilde{C} \prec 2\widetilde{C}.
	\]
	These results align with the expected preference ordering. 
	Note that $\widetilde{B} \prec \widetilde{C}$ indicates a preference for $\widetilde{C}$ over $\widetilde{B}$.
   
\begin{table}[htbp]
	\caption{Parameters (core $c$, standard deviation $\sigma$, and height $h$) and ranking indices for $\widetilde{A}$, $\widetilde{B}$, $\widetilde{C}$ and their combinations (\emph{All non-integer values are reported to four decimal places.})}
	\label{tab:example1}
    \begin{tabular*}{\textwidth}{@{\extracolsep\fill}cccccc}
		\toprule
		& \multicolumn{3}{c}{Parameters} & \multicolumn{2}{c}{Ranking indices} \\
		\cmidrule(lr){2-4}\cmidrule(lr){5-6}
		& Core $c$ & Std.\ Dev.\ $\sigma$ & Height $h$ & $R^{\mathrm{benefit}}$ & $R^{\mathrm{cost}}$ \\
		\midrule
		$\widetilde{A}$               & 15 & 3 & 0.6000 & 14.3345 & 15.6655 \\
		$\widetilde{B}$               & 5  & 1 & 0.7000 &  4.8451 &  5.1549 \\
		$\widetilde{C}$               & 5  & 1 & 0.9000 &  4.9542 &  5.0458 \\
		$\widetilde{A}+\widetilde{B}$ & 20 & 4 & 0.6236 & 19.1796 & 20.8204 \\
		$\widetilde{A}+\widetilde{C}$ & 20 & 4 & 0.6640 & 19.2887 & 20.7113 \\
		$2\widetilde{B}$              & 10 & 2 & 0.7000 &  9.6902 & 10.3098 \\
		$\widetilde{B}+\widetilde{C}$ & 10 & 2 & 0.7937 &  9.7993 & 10.2007 \\
		$2\widetilde{C}$              & 10 & 2 & 0.9000 &  9.9085 & 10.0915 \\
        \bottomrule
	\end{tabular*}
\end{table}

Under the adopted approach, all parenthesizations coincide:
\[
(\widetilde A+\widetilde B)+\widetilde C
= \left\langle (25,5);\;
(0.6235739)^{\frac{4}{5}}\,(0.9)^{\frac{1}{5}}\right\rangle
= \left\langle (25,5);\,0.6710561\right\rangle,
\]
\[
\widetilde A+(\widetilde B+\widetilde C)
= \left\langle (25,5);\;
(0.7937254)^{\frac{2}{5}}\,(0.6)^{\frac{3}{5}}\right\rangle
= \left\langle (25,5);\,0.6710561\right\rangle,
\]
\[
(\widetilde A+\widetilde C)+\widetilde B
= \left\langle (25,5);\;
(0.6640092)^{\frac{4}{5}}\,(0.7)^{\frac{1}{5}}\right\rangle
= \left\langle (25,5);\,0.6710561\right\rangle.
\]

For completeness, the three-term sum (without parentheses) is
\[
\widetilde A+\widetilde B+\widetilde C
= \left\langle(25,5);\;0.6^{\frac{3}{5}}\,0.7^{\frac{1}{5}}\,0.9^{\frac{1}{5}}\right\rangle
= \left\langle(25,5);\,0.6710561\right\rangle.
\]
We omit further details here, as the relevant intermediate calculations are already provided in~\ref{tab:example1}.
\end{example}

\section{Numerical Experiments: Fuzzy SPP Cases}\label{secExperiments}

This section reports numerical experiments conducted under generalized Gaussian fuzzy edge costs. We focus on the impact of cost uncertainty and quantify the error induced by replacing the original fuzzy model with crisp counterparts.

\paragraph{Stage 1 (model solution)}
We first solve the proposed ranking index-based minimization model under GGFN costs and record the resulting optimal decision (path) and objective value.

\paragraph{Terminology and rationale}
\emph{Ex-ante} refers to the baseline path chosen \emph{before} sampling, using the reliability-aware (risk-averse) ranking while uncertainty is still present. 
\emph{Ex-post} refers to, for each Monte Carlo draw, the path obtained \emph{after} costs are realized as crisp values by re-optimizing the deterministic problem purely by cost. 
In other words, ex-ante uses the membership information (including heights) to pick a conservative baseline, whereas ex-post treats a sampled instance as fully known and selects the cost-minimizing path for that instance.

Our goal is not to reproduce the ex-post minimum, but to assess how closely a reliability-aware solution tracks ex-post optima and to quantify the associated \emph{reliability premium} via Monte Carlo $\alpha$-cut sampling.

\paragraph{Stage 2 (simulation for error analysis)}
To assess the performance loss due to defuzzification, we generate crisp cost instances from the GGFNs and re-solve the induced deterministic shortest-path problems. In particular, for each edge $e$ we draw a level $u \sim U(0,h_e)$ and a mixing weight $v \sim U(0,1)$, compute the $\alpha$-cut endpoints $c_{e,L}(u)$ and $c_{e,R}(u)$ of the GGFN, and set the sample cost $\hat c_e$ as in~\ref{simu}.

This single-value scheme follows Chanas and Nowakowski's fuzzy-numerical simulation (adapted to GGFNs), in which membership levels are drawn uniformly and values are sampled uniformly from the induced $\alpha$-cuts~\cite{chanas1988single}. The resulting crisp instances are solved repeatedly to produce meaningful Monte Carlo summaries for evaluating the risk-averse solution. To ensure valid edge costs, we use rejection sampling: if the generator produces a negative cost, we discard it and draw again.

\paragraph{Error metric}
Let $\bar{\boldsymbol c}$ denote the \emph{baseline crisp} edge coefficients obtained from the ranking rule, and let $\hat{\boldsymbol c}$ denote the \emph{sampled crisp} edge coefficients obtained by $\alpha$-cut sampling for a given instance. For each instance, let $z^{0}$ be the optimal objective under $\bar{\boldsymbol c}$, and let $z$ be the optimal objective under $\hat{\boldsymbol c}$. We report the absolute percentage deviation from the baseline
\begin{equation}\label{eq:mc-error}
	\mathrm{Dev}(\%) \;=\; 100\,\frac{\lvert z - z^{0}\rvert}{\,z^{0}\,},
\end{equation}
defined whenever $z^{0}>0$. We then present the empirical distribution of Dev (\ref{eq:mc-error}) across replications and summarize it by the mean percentage deviation and the standard deviation of the percentage deviation.

\paragraph{Implementation}
All experiments were executed on a standard desktop environment. The deterministic subproblems can be solved with a shortest-path routine (e.g., Dijkstra) or a linear programming solver. In practice, we used MATLAB built-in graph routines (\texttt{graph}/\texttt{digraph}) and the \texttt{shortestpath} function to solve the induced shortest-path instances. For reproducibility of results, we use the random seed (\texttt{rng(42)} in MATLAB). For each case, we perform $10$ outer replications, each comprising $N=1000$ Monte Carlo samples. For every sample, levels are drawn independently across edges ($u_e\sim U(0,h_e)$, guarded against $u_e=0$ by machine-epsilon clipping), and mixing weights are drawn as $v_e\sim U(0,1)$. All draws are independent across edges and repetitions. 

\begin{example}
	The following example is adapted from Hasuike~\cite{hasuike2013robust}, who studied a robust shortest--path model with random edge costs. We take the reported mean--variance pairs as $(c,\sigma^{2})$ and represent each edge cost as a GGFN $\langle(c,\sigma);\,h\rangle$ with $\sigma=\sqrt{\sigma^{2}}$. 
	
	To emulate different reliability regimes, edge heights are drawn independently as
	$\{h_{ij}\}_{(i,j)\in E}{\sim}\mathrm{Beta}(a,b)$, where $a,b>0$ are the Beta shape parameters; We primarily evaluate four settings: \textsc{high} $(a,b)=(8,2)$, \textsc{moderate} $(4,3)$, \textsc{low} $(2,5)$, and a \textsc{mixed} regime defined by
	$h_{ij}\sim (1-\varepsilon)\,\mathrm{Beta}(8,2)+\varepsilon\,\mathrm{Beta}(2,5)$ (with $\varepsilon=0.2$). For completeness, two additional mixed settings used in our sensitivity checks are reported in~\ref{tab:heightsets}.
	
	\ref{fig:sampleNetworkGGFN} depicts the resulting network; the topology and $(c,\sigma^{2})$ pairs follow Hasuike~\cite{hasuike2013robust}, while the heights shown are a single i.i.d. draw from the \textsc{high} setting $\mathrm{Beta}(8,2)$.
	
	\begin{figure}[htbp]
		\centering
		\resizebox{0.9\linewidth}{!}{
		\begin{tikzpicture}[
			x=3.0cm,y=2.2cm,
			>=Latex,
			vtx/.style={circle,draw,minimum size=8mm,inner sep=0pt,font=\small},
			lbl/.style={fill=white,inner sep=1pt,font=\scriptsize,draw=none}
			]
			
			%--- nodes
			\node[vtx] (A) at (0, 0) {A};
			\node[vtx] (B) at (1, 1) {B};
			\node[vtx] (C) at (1,-1) {C};
			\node[vtx] (D) at (2, 0) {D};
			\node[vtx] (E) at (3, 1) {E};
			\node[vtx] (F) at (3,-1) {F};
			\node[vtx] (G) at (4, 0) {G};
			
			%--- edges with labels <(c, sigma); h> (2 decimals)
			\draw[->] (A) -- (B) node[pos=0.35,above left, lbl] {$\langle(15,\,2.24);\;0.89\rangle$};
			\draw[->] (A) -- (C) node[pos=0.35,below left, lbl] {$\langle(18,\,1.41);\;0.97\rangle$};
			
			\draw[->] (B) -- (D) node[pos=0.52,left,       lbl] {$\langle(19,\,2.83);\;0.88\rangle$};
			\draw[->] (C) -- (D) node[pos=0.48,below,      lbl] {$\langle(18,\,1.41);\;0.92\rangle$};
			
			\draw[->] (B) -- (E) node[pos=0.55,above,      lbl] {$\langle(33,\,4.58);\;0.66\rangle$};
			
			% -- adjusted around E to avoid overlap --
			\draw[->] (D) -- (E) node[pos=0.45,above right=4pt, lbl] {$\langle(17,\,1.73);\;0.90\rangle$};
			\draw[->] (E) -- (G) node[pos=0.60,below=3pt,      lbl] {$\langle(12,\,2.00);\;0.80\rangle$};
			
			% -- adjusted around F to avoid overlap --
			\draw[->] (D) -- (F) node[pos=0.35,below right=4pt,lbl] {$\langle(10,\,1.41);\;0.79\rangle$};
			\draw[->] (C) -- (F) node[pos=0.55,below,         lbl] {$\langle(35,\,3.16);\;0.94\rangle$};
			\draw[->] (F) -- (G) node[pos=0.80,below=3pt,     lbl] {$\langle(18,\,2.24);\;0.85\rangle$};
			
		\end{tikzpicture}
	}
		\caption{Sample network with GGFN edge costs $\langle(c,\sigma);h\rangle$.
			Here $\sigma=\sqrt{\text{Variance}}$ from the original labels, and heights are sampled i.i.d. from $\mathrm{Beta}(8,2)$ (high reliability). All decimal quantities are reported to two decimal places.}
		\label{fig:sampleNetworkGGFN}
	\end{figure}
		
	\begin{table}[htbp]
		\caption{Example height sets $h_{ij}$ under four reliability regimes plus two mixed settings. 
			\textsc{High}: $\mathrm{Beta}(8,2)$; \textsc{Moderate}: $\mathrm{Beta}(4,3)$; \textsc{Low}: $\mathrm{Beta}(2,5)$; 
			\textsc{Mixed}: $(1-\varepsilon)\,\mathrm{Beta}(8,2)+\varepsilon\,\mathrm{Beta}(2,5)$ (here $\varepsilon=0.2$).
			\textsc{Mixed--A} and \textsc{Mixed--B} report two alternative mixed settings.}\label{tab:heightsets}
		\begin{tabular}{lcccccc}
			\toprule
			Edge & \textsc{High} & \textsc{Moderate} & \textsc{Low} & \textsc{Mixed} & \textsc{Mixed--A} & \textsc{Mixed--B} \\
			\midrule
			AB & 0.89 & 0.51 & 0.05 & 0.33 & 0.20 & 0.98 \\
			AC & 0.97 & 0.71 & 0.09 & 0.77 & 0.98 & 0.20 \\
			BD & 0.88 & 0.89 & 0.20 & 0.97 & 0.50 & 0.98 \\
			BE & 0.66 & 0.59 & 0.26 & 0.63 & 0.20 & 0.20 \\
			CD & 0.92 & 0.34 & 0.16 & 0.85 & 0.98 & 0.20 \\
			CF & 0.94 & 0.75 & 0.44 & 0.80 & 0.50 & 0.20 \\
			DE & 0.90 & 0.77 & 0.24 & 0.25 & 0.50 & 0.20 \\
			DF & 0.79 & 0.65 & 0.20 & 0.78 & 0.98 & 0.98 \\
			EG & 0.80 & 0.61 & 0.49 & 0.76 & 0.20 & 0.20 \\
			FG & 0.85 & 0.58 & 0.09 & 0.77 & 0.98 & 0.98 \\
			\bottomrule
		\end{tabular}
    \end{table}

\begin{table}[htbp]
	\caption{Path summaries (part I). Core $c$, dispersion $\sigma$, and rankings under \textsc{High/Moderate/Low} reliability settings.}
	\label{tab:paths-part1}
	\begin{tabular*}{\textwidth}{@{\extracolsep{\fill}}l r r rr rr rr}
		\toprule
		& & & \multicolumn{2}{c}{\textsc{High} $(a,b)=(8,2)$} & \multicolumn{2}{c}{\textsc{Moderate} $(4,3)$} & \multicolumn{2}{c}{\textsc{Low} $(2,5)$}\\
		\cmidrule(lr){4-5}\cmidrule(lr){6-7}\cmidrule(lr){8-9}
		Path & $c$ & $\sigma$ & $h$ & $R^{\mathrm{cost}}$ & $h$ & $R^{\mathrm{cost}}$ & $h$ & $R^{\mathrm{cost}}$ \\
		\midrule
		$\mathrm{A}\!-\!\mathrm{B}\!-\!\mathrm{E}\!-\!\mathrm{G}$         & 60 & 8.82 & 0.74 & \textbf{61.14} & 0.57 & \textbf{62.13} & 0.20 & \textbf{66.18} \\
		$\mathrm{A}\!-\!\mathrm{B}\!-\!\mathrm{D}\!-\!\mathrm{E}\!-\!\mathrm{G}$ & 63 & 8.80 & 0.87 & 63.54 & 0.69 & 64.43 & 0.18 & 69.54 \\
		$\mathrm{A}\!-\!\mathrm{B}\!-\!\mathrm{D}\!-\!\mathrm{F}\!-\!\mathrm{G}$ & 62 & 8.71 & 0.86 & 62.56 & 0.66 & 63.59 & 0.12 & 70.17 \\
		$\mathrm{A}\!-\!\mathrm{C}\!-\!\mathrm{D}\!-\!\mathrm{E}\!-\!\mathrm{G}$ & 65 & 6.56 & 0.89 & 65.34 & 0.59 & 66.50 & 0.22 & 69.28 \\
		$\mathrm{A}\!-\!\mathrm{C}\!-\!\mathrm{D}\!-\!\mathrm{F}\!-\!\mathrm{G}$ & 64 & 6.48 & 0.84 & 64.37 & 0.71 & 65.66 & 0.29 & 69.91\\
		$\mathrm{A}\!-\!\mathrm{C}\!-\!\mathrm{F}\!-\!\mathrm{G}$         & 71 & 6.81 & 0.88 & 71.26 & 0.79 & 72.14 & 0.62 & 75.93 \\
		\bottomrule
	\end{tabular*}
\end{table}

\begin{table}[htbp]
	\caption{Path summaries (part II). Rankings under \textsc{Mixed/Mixed--A/Mixed--B} reliability settings.}
	\label{tab:paths-part2}
	\begin{tabular*}{\textwidth}{@{\extracolsep{\fill}}l rr rr rr}
		\toprule
		& \multicolumn{2}{c}{\textsc{Mixed} } & \multicolumn{2}{c}{\textsc{Mixed--A}} & \multicolumn{2}{c}{\textsc{Mixed--B}} \\
		\cmidrule(lr){2-3}\cmidrule(lr){4-5}\cmidrule(lr){6-7}
		Path & $h$ & $R^{\mathrm{cost}}$ & $h$ & $R^{\mathrm{cost}}$ & $h$ & $R^{\mathrm{cost}}$ \\
		\midrule
		$\mathrm{A}\!-\!\mathrm{B}\!-\!\mathrm{E}\!-\!\mathrm{G}$
		& 0.56 & \textbf{62.24} & 0.20 & 66.16 & 0.30 & 64.62 \\
		$\mathrm{A}\!-\!\mathrm{B}\!-\!\mathrm{D}\!-\!\mathrm{E}\!-\!\mathrm{G}$
		& 0.53 & 65.42 & 0.32 & 67.33 & 0.50 & 65.65 \\
		$\mathrm{A}\!-\!\mathrm{B}\!-\!\mathrm{D}\!-\!\mathrm{F}\!-\!\mathrm{G}$
		& 0.67 & 63.52 & 0.52 & 64.45 & 0.98 & \textbf{62.08} \\
		$\mathrm{A}\!-\!\mathrm{C}\!-\!\mathrm{D}\!-\!\mathrm{E}\!-\!\mathrm{G}$
		& 0.58 & 66.56 & 0.51 & 66.94 & 0.20 & 69.59 \\
		$\mathrm{A}\!-\!\mathrm{C}\!-\!\mathrm{D}\!-\!\mathrm{F}\!-\!\mathrm{G}$
		& 0.79 & 64.66 & 0.98 & \textbf{64.06} & 0.49 & 66.01 \\
		$\mathrm{A}\!-\!\mathrm{C}\!-\!\mathrm{F}\!-\!\mathrm{G}$
		& 0.79 & 71.71 & 0.72 & 71.98 & 0.34 & 74.22 \\
		\bottomrule
	\end{tabular*}
\end{table}

As summarized in~\ref{tab:paths-part1} and~\ref{tab:paths-part2}, the cost--type ranking selects $\mathrm{A}\!-\!\mathrm{B}\!-\!\mathrm{E}\!-\!\mathrm{G}$ as the optimal path under the \textsc{High, Moderate, Low}, and \textsc{Mixed} regimes, whereas other mixed settings yield $\mathrm{A}\!-\!\mathrm{C}\!-\!\mathrm{D}\!-\!\mathrm{F}\!-\!\mathrm{G}$  (\textsc{Mixed--A}) and $\mathrm{A}\!-\!\mathrm{B}\!-\!\mathrm{D}\!-\!\mathrm{F}\!-\!\mathrm{G}$ (\textsc{Mixed--B}), respectively. Tables~\ref{tab:paths-part1} and~\ref{tab:paths-part2} indicate that the optimal path is stable ($\mathrm{A}\!-\!\mathrm{B}\!-\!\mathrm{E}\!-\!\mathrm{G}$) across broad i.i.d. reliability regimes (\textsc{High/Moderate/Low} and \textsc{Mixed}), and changes only under targeted mixed settings (\textsc{Mixed--A} and \textsc{Mixed--B}). Consistent with our results, Hasuike \cite{hasuike2013robust} reports $\mathrm{A}\!-\!\mathrm{C}\!-\!\mathrm{D}\!-\!\mathrm{F}\!-\!\mathrm{G}$ as optimal under both the probability-maximizing and robust models, whereas the basic SPP selects $\mathrm{A}\!-\!\mathrm{B}\!-\!\mathrm{E}\!-\!\mathrm{G}$. Moreover, as the goal parameter $f_L$ increases from $60$ to $66$, the optimal path shifts from $\mathrm{A}\!-\!\mathrm{B}\!-\!\mathrm{E}\!-\!\mathrm{G}$ ($f_L=60$) to $\mathrm{A}\!-\!\mathrm{B}\!-\!\mathrm{D}\!-\!\mathrm{F}\!-\!\mathrm{G}$ ($f_L=63$) and $\mathrm{A}\!-\!\mathrm{C}\!-\!\mathrm{D}\!-\!\mathrm{F}\!-\!\mathrm{G}$ ($f_L=66$). 

Holding $f_L=66$, $f_U=75$, and $d_U=3$, increasing $d_L$ from $1.0$ to $2.5$ shifts the optimal path from $\mathrm{A}\!-\!\mathrm{B}\!-\!\mathrm{E}\!-\!\mathrm{G}$ ($d_L=1.0$) to $\mathrm{A}\!-\!\mathrm{B}\!-\!\mathrm{D}\!-\!\mathrm{F}\!-\!\mathrm{G}$  ($1.5$) and then to $\mathrm{A}\!-\!\mathrm{C}\!-\!\mathrm{D}\!-\!\mathrm{F}\!-\!\mathrm{G}$  ($2.0,\,2.5$),\,
consistent with the model's design: small $d_L$ places little weight on dispersion (recovering the standard SPP solution), whereas larger $d_L$ emphasizes variance reduction and yields more risk-averse routes.

Overall, the optimal solution is sensitive to parameter variation and the specification of risk preferences; even modest changes in these inputs can reorder candidate paths and shift the optimum toward more risk-averse routes.

Having specified the reliability-aware ranking and obtained the baseline path and objective (Stage~1), we now proceed to Stage~2 to assess robustness: we generate crisp instances via Monte Carlo $\alpha$-cut sampling and summarize performance using the percentage deviation statistics.

\begin{table}[htbp]
	\caption{Absolute percentage deviation (Dev~\%) and \emph{standard deviation of Dev} for \textsc{High} and \textsc{Moderate} reliability settings across $10$ replications (rounded to 4 decimals).}
	\label{tab:dev-high-moderate}
	\begin{tabular}{@{}rccccc@{}}
		\toprule
		& \multicolumn{2}{c}{\textsc{High}} & \multicolumn{2}{c}{\textsc{Moderate}} \\
		\cmidrule(lr){2-3} \cmidrule(lr){4-5}
		Rep. & Dev (\%) & Std. Dev. (Dev) & Dev (\%) & Std. Dev. (Dev) \\
		\midrule
		1  & 5.3192 & 4.5710 & 6.1338 & 4.9280 \\
		2  & 5.2651 & 4.3867 & 6.2886 & 4.8916 \\
		3  & 5.3035 & 4.5718 & 6.3028 & 5.0162 \\
		4  & 5.2564 & 4.4244 & 6.3757 & 4.8692 \\
		5  & 5.3150 & 4.1998 & 6.3480 & 4.6182 \\
		6  & 5.6864 & 4.4516 & 7.5891 & 5.0268 \\
		7  & 5.4206 & 4.4216 & 7.0094 & 5.0284 \\
		8  & 5.2920 & 4.1643 & 6.6642 & 4.6497 \\
		9  & 5.1006 & 4.2974 & 5.6678 & 4.6437 \\
		10 & 5.7035 & 4.3878 & 6.6414 & 4.9216 \\
		\midrule
		\textbf{Mean} & \textbf{5.3662} & \textbf{4.3876} & \textbf{6.5021} & \textbf{4.8593} \\
		\bottomrule
	\end{tabular}
\end{table}

\begin{table}[htbp]
	\caption{Absolute percentage deviation (Dev~\%) and \emph{standard deviation of Dev} for \textsc{Low} and \textsc{Mixed} settings across $10$ replications (rounded to 4 decimals).}
	\label{tab:dev-low-mixed}
	\begin{tabular}{@{}rcccc@{}}
		\toprule
		& \multicolumn{2}{c}{\textsc{Low}} & \multicolumn{2}{c}{\textsc{Mixed}} \\
		\cmidrule(lr){2-3}\cmidrule(lr){4-5}
		Rep. & Dev (\%) & Std. Dev. (Dev) & Dev (\%) & Std. Dev. (Dev) \\
		\midrule
		1  & 8.6731 & 5.4286 & 5.2362 & 4.3685 \\
		2  & 10.9313 & 4.9576 & 6.1633 & 4.7213 \\
		3  & 10.1704 & 5.2453 & 6.6705 & 5.0122\\
		4  & 10.2511 & 5.1000 & 6.8820 & 4.9866 \\
		5  & 9.9290 & 5.1481 & 5.0512 & 4.3193 \\
		6  & 9.9177 & 5.0496 & 5.6908 & 4.4811 \\
		7  & 9.9895 & 5.1199 & 5.2315 & 4.1998 \\
		8  & 9.1868 & 5.2812 & 6.2319 & 4.7699 \\
		9  & 10.0580 & 5.3762 & 6.0984 & 4.7775 \\
		10 & 11.6725 & 5.3328 & 6.6728 & 4.7795 \\
		\midrule
		\textbf{Mean} & \textbf{10.0780} & \textbf{5.2039} & \textbf{5.9929} & \textbf{4.6416} \\
		\bottomrule
	\end{tabular}
\end{table}

\begin{table}[htbp]
	\caption{Absolute percentage deviation (Dev~\%) and \emph{standard deviation of Dev} for \textsc{Mixed--A} and \textsc{Mixed--B} settings across $10$ replications (rounded to 4 decimals).}
	\label{tab:dev-mixedA-mixedB}
	\begin{tabular}{@{}rcccc@{}}
		\toprule
		& \multicolumn{2}{c}{\textsc{Mixed--A}} & \multicolumn{2}{c}{\textsc{Mixed--B}} \\
		\cmidrule(lr){2-3}\cmidrule(lr){4-5}
		Rep. & Dev (\%) & Std. Dev. (Dev) & Dev (\%) & Std. Dev. (Dev) \\
		\midrule
		1  & 8.6738 & 5.1307 & 6.2512 & 4.7823 \\
		2  & 8.8612 & 5.2180 & 6.4449 & 4.8518 \\
		3  & 8.6490 & 5.0848 & 6.2677 & 4.6650 \\
		4  & 8.8014 & 5.2122 & 6.3393 & 4.9146 \\
		5  & 8.8713 & 5.2225 & 6.4709 & 4.8354 \\
		6  & 8.7601 & 5.1232 & 6.4131 & 4.6707 \\
		7  & 9.0235 & 5.4053 & 6.5896 & 5.0781 \\
		8  & 9.2159 & 5.3598 & 6.7255 & 5.0897 \\
		9  & 8.9630 & 5.3791 & 6.5394 & 5.0442 \\
		10 & 8.7775 & 5.1267 & 6.3294 & 4.7843 \\
		\midrule
		\textbf{Mean} & \textbf{8.8597} & \textbf{5.2262} & \textbf{6.4371} & \textbf{4.8716} \\
		\bottomrule
	\end{tabular}
\end{table}

\paragraph{Discussion}
Across settings, deviation increases as reliability decreases. Using Tables~\ref{tab:dev-high-moderate}, \ref{tab:dev-low-mixed}, and \ref{tab:dev-mixedA-mixedB}, the mean Dev (SD of Dev) is \textbf{5.37\%} (\textbf{4.39\%}) in \textsc{High}, \textbf{6.50\%} (\textbf{4.86\%}) in \textsc{Moderate}, \textbf{5.99\%} (\textbf{4.64\%}) in \textsc{Mixed}, \textbf{8.86\%} (\textbf{5.23\%}) in \textsc{Mixed--A}, \textbf{6.44\%} (\textbf{4.87\%}) in \textsc{Mixed--B}, and \textbf{10.08\%} (\textbf{5.20\%}) in \textsc{Low}. Thus, all but \textsc{Low} remain single-digit; \textsc{Mixed--A} sits closer to \textsc{Low}, whereas \textsc{Mixed--B} is nearer to \textsc{Moderate}. Ranges across the $10$ replications are narrow (e.g., \textsc{High}: $[5.1006,\,5.7035]\%$; \textsc{Low}: $[8.6731,\,11.6725]\%$), indicating stable Monte Carlo estimates and a crisp baseline that tracks re-optimized solutions reasonably well except under \textsc{Low} reliability.

\paragraph{Interpretation}
This double-digit gap in the Low regime reflects the cost of risk aversion rather than a model flaw. Our baseline path is chosen by a reliability-aware ranking that rewards higher heights, whereas the benchmark re-optimizes each instance under plain crisp costs that ignore reliability. This intentional misalignment yields a larger ex-post deviation when heights are low (i.e., reliable edges are scarce). Note that, with $u\sim U(0,h)$, we have $u/h\sim U(0,1)$, so the sampling distribution of crisp costs depends on $(c,\sigma)$ but not on $h$; heights only affect the baseline via the ranking penalty. Consequently, as reliability decreases, path choices diverge more often, and the deviation increases. In regimes with \textsc{moderate}-to-\textsc{high} heights, deviations remain single-digit and between-replication spreads are narrow, indicating stable estimates.

The increased deviation in the Low regime stems from three main factors. First, there is an intentional misalignment between our method and the metric: the baseline path rewards reliability, whereas the benchmark re-optimizes based solely on crisp costs. Second, the sampling distribution of crisp costs depends on the core and spread but not the height, meaning lower heights shift the baseline without narrowing the sampling spread. Finally, low reliability causes edge rankings to fluctuate more intensely, leading to a `path-switching effect' where the ex-post optimal path diverges frequently from the baseline.
\end{example}

\section{Construction and Analysis of Large-Scale GGFN--SPP Instances}\label{secLarge}

We now construct and analyze large-scale instances of the GGFN–SPP using real-world transportation data from the Federal Aviation Administration (FAA)~\cite{faa2010} network. We focus on a case study based on the FAA air traffic network, which illustrates how the proposed $\alpha$--cut--based height aggregation and risk--averse ranking indices behave on a realistically sized, sparse transportation network.

\subsection{FAA Air Traffic Network}

The underlying crisp network is the ``maayan-faa'' graph from the Koblenz Network Collection (KONECT), derived from the U.S.\ FAA National Flight Data Center (NFDC) preferred routes database (\url{www.fly.faa.gov}). The original FAA network used in this case study is derived from the KONECT “maayan-faa” dataset ~\cite{konect2017faa}. %Our processed edge list (with $(c_{ij},\sigma_{ij},h_{ij})$ tuples) and all scripts needed to reproduce the experiments in~\ref{secExperiments,secLarge} are available in our public repository ~\cite{moran-akdemir-fuzzy}. 

The raw graph does not provide dispersion; we calibrate $\sigma_{ij}$ from the nominal cost via a cost-scaled uniform rule: draw $u_{ij}\sim U(0,1)$ and set $\sigma_{ij}=0.4\,c_{ij}\,u_{ij}$, equivalently $\sigma_{ij}\sim U(0,\,0.4\,c_{ij})$. This keeps spreads proportional to edge cores, yields heterogeneous but bounded $\alpha$--cut widths, and avoids excessive rejection in the Monte Carlo sampler while maintaining nonnegativity at the evaluation levels used here.

From the original GraphML/TSV distribution, we extract a directed simple graph and relabel vertices as \texttt{n0}, \texttt{n1}, \dots\ in the order provided. The nodes represent airports or air traffic control service centers, and a directed edge $(i,j)$ indicates that there is a preferred route from $i$ to $j$ in the NFDC database. The KONECT metadata characterizes the graph as unweighted, directed, and of transportation type. The resulting network used in our experiments contains $N = 1226$ nodes and $m = 2615$ directed edges, as summarized in~\ref{tab:faa-summary} and visualized in~\ref{fig:faa-network.png}. This size is moderate by network science standards, but substantially larger than the toy examples typically used in fuzzy SPP studies. In addition, it is sufficiently rich to exhibit multiple long-range alternative routes between major hubs.

% FAA network snapshot (generated by draw_graph.py on data/faa.csv)
\begin{figure}[htbp]
  \centering
  \includegraphics[width=0.7\textwidth]{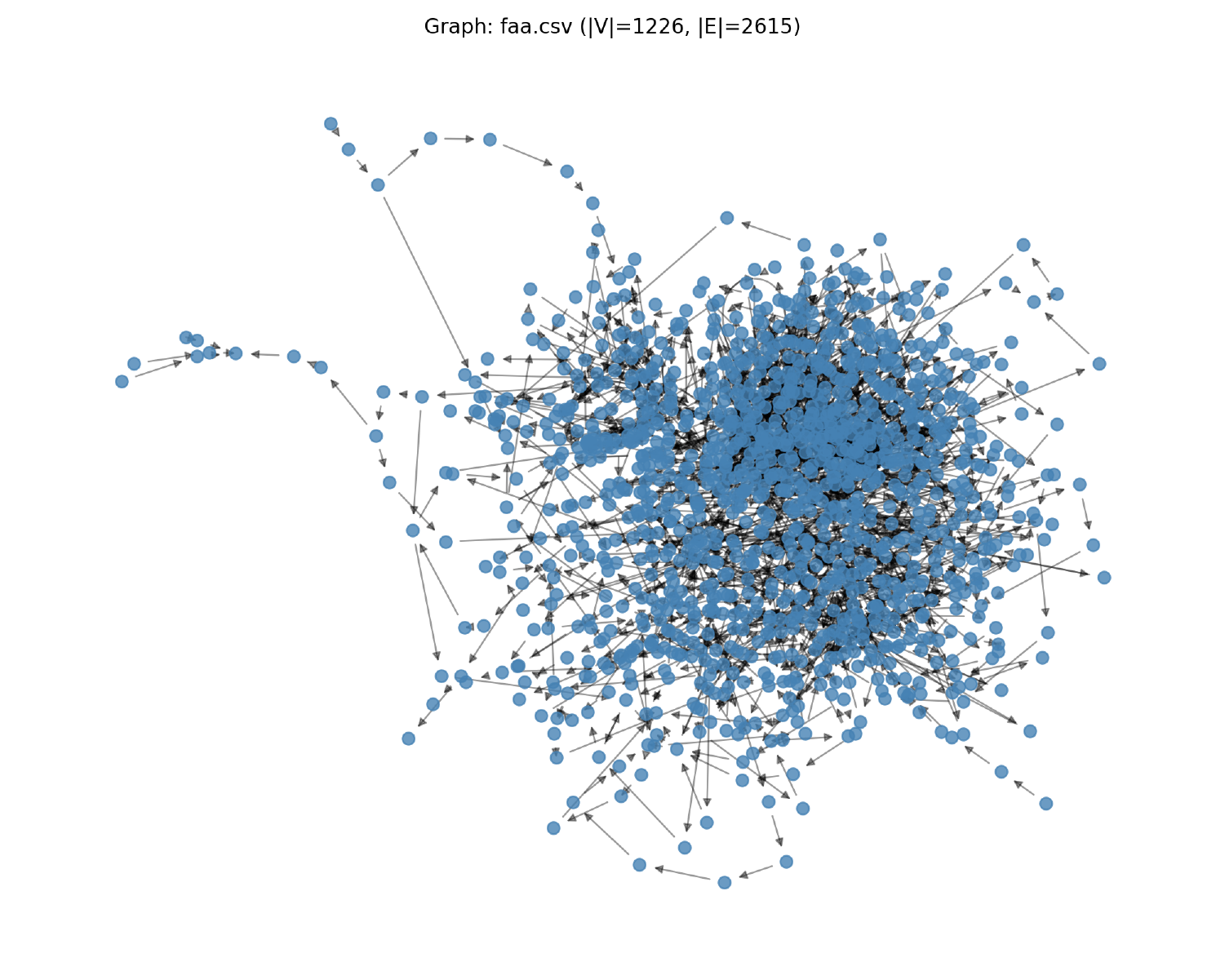}
  \caption{FAA air traffic network used in the case study.
  Nodes represent airports or service centers; directed edges encode
  preferred routes extracted from the KONECT ``maayan-faa'' dataset
  \cite{konect:2017:maayan-faa}.}
  \label{fig:faa-network.png}
\end{figure}

\subsection{From Crisp Routes to GGFN Edge Costs}

To instantiate a GGFN--SPP on this network, each directed edge $(i,j)$ is endowed with a Type--1 GGFN $\widetilde{A}_{ij}$ with core $c_{ij}$, dispersion $\sigma_{ij}$ and height $h_{ij}$, as developed in~\ref{secPre} and~\ref{secAddition}. The triple $(c_{ij},\sigma_{ij},h_{ij})$ encodes a baseline cost, variability, and reliability for the corresponding air route:

\begin{itemize}
\item The core $c_{ij} > 0$ plays the role of a nominal travel cost (e.g., a surrogate for the typical flight time or the generalized distance). In the FAA dataset used here, the calibrated core values lie in the range $[5.01, 49.99]$, with a mean $27.44$ and standard deviation $13.10$ (arbitrary units), yielding a fairly heterogeneous mix of short and long edges.
\item The dispersion $\sigma_{ij} > 0$ controls the spread of the Gaussian membership and therefore the width of $\alpha$--cuts. In the FAA case, $\sigma_{ij}$ is imputed via $\sigma_{ij}\sim U(0,\,0.4\,c_{ij})$, yielding heterogeneous but bounded spreads; with $c_{ij}\in[5.01,49.99]$, this places $\sigma_{ij}$ in $[0,\,20]$ and gives an expected mean of about $0.2\,c_{ij}$ (roughly $5.5$ overall) and an expected standard deviation of about $0.115\,c_{ij}$ (roughly $3.5$ overall), so some edges are nearly crisp while others are substantially fuzzy.
\item The height $h_{ij} \in (0,1]$ quantifies the reliability of the information underlying the core cost. Heights are modeled as subnormal in order to explicitly capture partial inconsistency or incompleteness of the underlying evidence. In the FAA instance, $h_{ij}$ ranges from $0.27$ to $0.997$, with mean $0.80$ and median $0.82$; approximately $90\%$ of edges have height above $0.57$, but a nontrivial tail of low-height edges represents poorly supported or infrequently used routes.
\end{itemize}

These empirical ranges are reported in~\ref{tab:faa-summary} and their distributions are shown in~\ref{fig:faa-param-hist}.

% FAA dataset summary (from data/faa.csv)
\begin{table}[htbp]
  \caption{Summary statistics for the FAA GGFN--SPP instance.
  The graph has $N=1226$ nodes and $m=2615$ directed edges.
  Edge parameters $(c_{ij},\sigma_{ij},h_{ij})$ are as described in~\ref{secLarge}.}
  \label{tab:faa-summary}
  \begin{tabular}{lcccccc}
    \toprule
    Quantity   & Min   & Max   & Mean  & Median & Std.\ dev. \\
    \midrule
    Core $c_{ij}$        & 5.01   & 49.99 & 27.44 & 27.76 & 13.10 \\
    Dispersion $\sigma_{ij}$ & 0.00   & 20.00 & 5.50  & 5.55  & 3.50  \\
    Height $h_{ij}$      & 0.27   & 0.997 & 0.80  & 0.82  & 0.12  \\
    \bottomrule
  \end{tabular}
\end{table}

% Distributions of c, sigma, h (generated via a small script on data/faa.csv)
\begin{figure}[t]
  \centering
  \includegraphics[width=\textwidth]{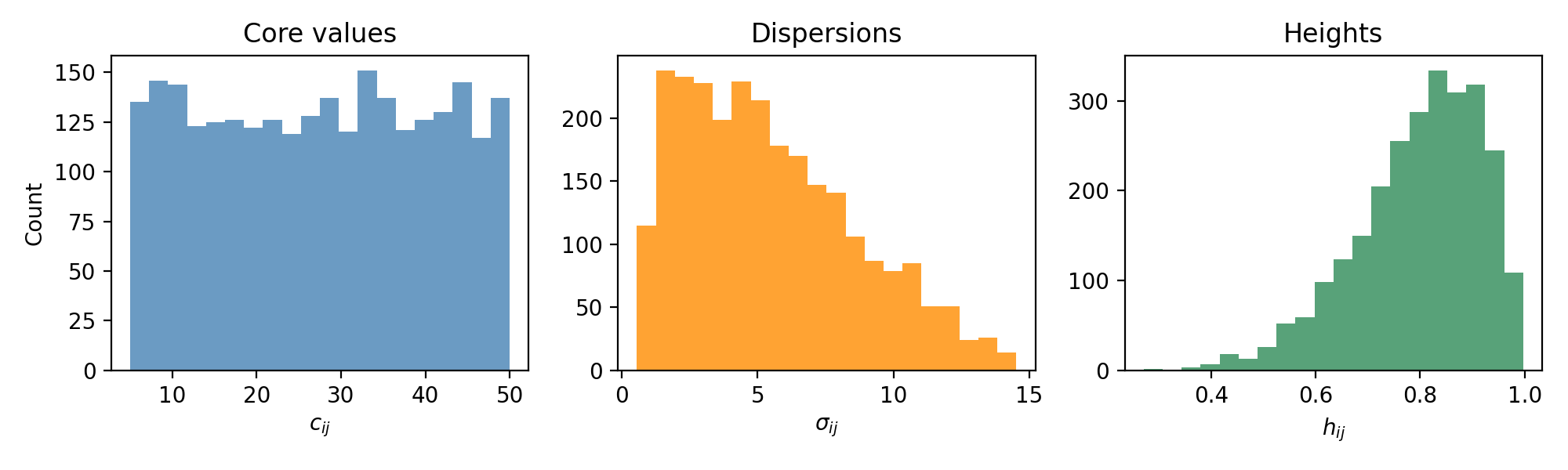}
  \caption{Empirical distributions of core costs $c_{ij}$, dispersions
  $\sigma_{ij}$, and heights $h_{ij}$ in the FAA GGFN--SPP instance. Distributions show heterogeneous nominal costs, varied dispersions, and subnormal heights concentrated at the upper end.}
  \label{fig:faa-param-hist}
\end{figure}

Heights are sampled once and then held fixed across experiments according to the Beta--mixture regimes introduced earlier. Specifically, we employ the ``mixed'' regime in which most edges draw $h_{ij}$ from a high--reliability $\mathrm{Beta}(8,2)$ distribution, while a random $\varepsilon$--fraction of edges are flipped to a low--reliability $\mathrm{Beta}(2,5)$ component. The resulting subnormal Gaussian fuzzy numbers are then fully specified by $(c_{ij},\sigma_{ij},h_{ij})$, and their $\alpha$--cut intervals are computed via~\ref{rem:fuzzynum} with $\alpha$ clamped to $(0,h_{ij})$ in practice to avoid empty cuts.

%\begin{equation}
%  \widetilde{A}_{ij,\alpha}
%  =
%  \bigl[
%    c_{ij} - \sigma_{ij}\sqrt{-2\ln(\alpha/h_{ij})},
%    \;
%    c_{ij} + \sigma_{ij}\sqrt{-2\ln(\alpha/h_{ij})}
%  \bigr],
%  \qquad 0 < \alpha \le h_{ij},
%  \label{eq:faa_alpha_interval}
%\end{equation}

\subsection{Path Aggregation and Instance Families}

Given these edgewise GGFNs, we consider path costs defined by the $\alpha$--cut--based addition developed in~\ref{secAddition}. For a simple path $P$ from source $s$ to terminal $t$, with edges $(i,j) \in P$, the path-level GGFN parameters $(c_P,\sigma_P,h_P)$ are obtained by iterated application of \ref{eq:addition}, namely
$c_P = \sum_{(i,j) \in P} c_{ij}$, $\sigma_P = \sum_{(i,j) \in P} \sigma_{ij}$, and
$h_P = \exp\!\left( \frac{\sum_{(i,j) \in P} \sigma_{ij} \ln h_{ij}}{\sum_{(i,j) \in P} \sigma_{ij}} \right)$,
so that core and dispersion aggregate additively, while the height aggregates via a $\sigma$--weighted geometric mean. As shown earlier, this aggregation keeps $h_P$ within the input height bounds, preserves associativity of addition, and respects the intended interpretation of $h_{ij}$ as reliability weights.

% will be removed
%To study scaling behavior, we construct a family of nested GGFN--SPP instances by inducing subgraphs on the first $k$ vertices ${\texttt{n0},\dots,\texttt{n}(k-1)}$ with $k$ taken on an $N/8$ grid: $k_i = \lceil i\,N/8\rceil$, $i=1,\dots,8$. For the FAA network ($N=1226$) this yields $(N,m)$ approximately equal to $(154,433)$, $(307,789)$, $(460,1111)$, $(613,1400)$, $(767,1722)$, $(920,2046)$, $(1073,2373)$, and the full $(1226,2615)$ network. For each such instance, we fix a source--terminal pair (e.g., $s=\texttt{n0}$, $t=\texttt{n1200}$ in the full FAA graph) and consider the corresponding GGFN--SPP.

\subsection{Risk-Averse Ranking and Baseline Paths}

Within this path-additive framework, the risk--averse ranking indices from~\ref{secRanking} provide scalar surrogates to order fuzzy paths. 
For each instance, we compute two baseline $s$--$t$ paths:
\begin{enumerate}
\item A core path $x_c$ that minimizes the crisp sum of cores, i.e., solves the classical shortest path problem with edge weights $c_{ij}$.
\item A ranked path $x_R$ that minimizes $R^{\mathrm{cost}}(\langle (c_P,\sigma_P);\;h_P\rangle,\kappa)$ (\ref{eq:ranking_cost}) using the $\alpha$--cut--consistent addition and the $\sigma$--weighted height aggregation. This is obtained via a Dijkstra--style label--setting algorithm that propagates $(c_P,\sigma_P,h_P)$ along the graph and orders tentative labels by $R^{\mathrm{cost}}$.
\end{enumerate}

In the FAA case with $\kappa = 1$, both procedures select the same $s$--$t$ path from \texttt{n2} to \texttt{n1189}, where the path consists of 10 vertices (9 edges). This path can be interpreted as the ``preferred'' route that simultaneously balances low nominal cost and high reliability; the coincidence of $x_c$ and $x_R$ at $\kappa = 1$ indicates that, for this instance, the most cost-efficient path is already composed predominantly of high-height edges.

\subsection{Computational Cost: Deterministic Scaling with Ranked Solver on FAA}

We time the ranked solver deterministically across an $N/8$ size grid on the FAA network ($N{=}1226$) to illustrate scaling (see~\ref{algo:faa-scaling}) for the algorithm details). For each size $k_i=\lceil iN/8\rceil$, we induce the subgraph on the first $k_i$ nodes, choose a reachable destination (largest node reachable from $\texttt{n0}$), and measure wall-clock time and peak memory.~\footnote{Experiments were run on a 2024 MacBook Pro (M4 chip) using Python 3.13. The software environment included \texttt{Python 3.13} with the \texttt{networkx}, \texttt{numpy}, and \texttt{pandas} libraries, alongside custom implementations for GGFN arithmetic and Monte Carlo sampling. Runtime measurements and peak memory usage were obtained using Python's \texttt{time} and \texttt{tracemalloc} modules under single-threaded execution. No GPU acceleration or parallelization was employed.} The function, \textsc{DijkstraRanked}, is a deterministic, risk-aware shortest-path routine: a Dijkstra-style algorithm that propagates fuzzy edge triples $c,\sigma,h)$, aggregates them along paths (additive cores and spreads; $\sigma$-weighted geometric mean for height), and orders tentative labels by the ranking index $R^{\mathrm{cost}} = c - \kappa\sigma\log h$. It yields the ex-ante (ranked) baseline path. The solver shows near-linear growth in runtime with respect to the number of edges $m$ across the $N/8$ grid, with peak memory well below $1$\,MB even at the full FAA scale. 

\begin{algorithm}
\caption{Runtime experiment on the FAA network (deterministic ranked solver across an $N/8$ grid)}
\label{algo:faa-scaling}
\begin{algorithmic}[1]
\REQUIRE FAA edge-list CSV with columns \texttt{source}, \texttt{target}, \texttt{core\_c}, \texttt{sigma}, \texttt{height\_h};
         full network size $N_{\mathrm{full}}$ (FAA: $1226$ nodes);
         size grid $\mathcal{K}=\{\lceil i\,N_{\mathrm{full}}/8\rceil: i=1,\dots,8\}$;
         number of timing repeats $R_{\mathrm{time}}\ge 1$.
\ENSURE For each $k\in\mathcal{K}$, record node count $N_k$, edge count $m_k$, destination $t_{\mathrm{dst}}$, median runtime $t^{\mathrm{med}}_k$, and peak memory $M^{\max}_k$ for the deterministic ranked solver.
\STATE Load the FAA CSV into arrays $(s,t,c,\sigma,h)$ and map node ids to $\{1,\dots,N\}$.
\FOR{each $k\in\mathcal{K}$}
  \STATE Induce subgraph $(s^{(k)},t^{(k)},c^{(k)},\sigma^{(k)},h^{(k)})$ on nodes $\{1,\dots,k\}$.
  \STATE $N_k \Leftarrow k$, $m_k \Leftarrow \text{length}(s^{(k)})$, $s_{\mathrm{src}}\Leftarrow 1$, $t_{\mathrm{dst}}\Leftarrow$ largest node reachable from $s_{\mathrm{src}}$ in the subgraph.
  \STATE $T_k \Leftarrow [\,]$ and peak memory $M^{\max}_k \Leftarrow 0$.
  \FOR{$r = 1$ \textbf{to} $R_{\mathrm{time}}$}
    \STATE Start wall-clock timer and memory profiler.
    \STATE Run \textsc{DijkstraRanked} on $(s^{(k)},t^{(k)},c^{(k)},\sigma^{(k)},h^{(k)})$ from $s_{\mathrm{src}}$ to $t_{\mathrm{dst}}$.
    \STATE Stop timer and memory profiler; 
    \STATE let $\Delta t_r$ be the elapsed time,
           $M_r$ the peak memory for this repetition.
    \STATE Append $\Delta t_r$ to $T_k$ and update $M^{\max}_k \Leftarrow \max(M^{\max}_k,M_r)$.
  \ENDFOR
  \STATE Set $t^{\mathrm{med}}_k \Leftarrow \mathrm{median}(T_k)$ and store $(N_k,m_k,t_{\mathrm{dst}},t^{\mathrm{med}}_k,M^{\max}_k)$.
\ENDFOR
\end{algorithmic}
\end{algorithm}

%\begin{algorithm}
%\caption{Calculate $y = x^n$}\label{algo1}
%\begin{algorithmic}[1]
%\Require $n \geq 0 \vee x \neq 0$
%\Ensure $y = x^n$ 
%\State $y \Leftarrow 1$
%\If{$n < 0$}\label{algln2}
%        \State $X \Leftarrow 1 / x$
%        \State $N \Leftarrow -n$
%\Else
%        \State $X \Leftarrow x$
%        \State $N \Leftarrow n$
%\EndIf
%\While{$N \neq 0$}
%        \If{$N$ is even}
%            \State $X \Leftarrow X \times X$
%            \State $N \Leftarrow N / 2$
%        \Else[$N$ is odd]
%            \State $y \Leftarrow y \times X$
%            \State $N \Leftarrow N - 1$
%        \EndIf
%\EndWhile
%\end{algorithmic}
%\end{algorithm}

% Scaling summary (from benchmark_long.json)
\begin{table}[htbp]
\caption{Runtime and memory scaling of deterministic ranked solver on induced FAA subgraphs
  along the $N/8$ size grid $k_i=\lceil i\,N/8\rceil$ (FAA: $N{=}1226$).
  Median runtimes and peak memory are taken from single runs of the ranked solver.}
  \label{tab:faa-scaling}
  \begin{tabular}{rccccc}
    \toprule
    $N$   & $m$   & Time (s) & Peak memory (MB) & $t_{\mathrm{dst}}$ \\
    \midrule
    154   &  409  & 0.0211 & 0.09 & 154 \\
    308   &  835  & 0.0461 & 0.12 & 308 \\
    462   & 1185  & 0.0541 & 0.19 & 462 \\
    616   & 1494  & 0.0633 & 0.24 & 616 \\
    770   & 1807  & 0.0530 & 0.36 & 770 \\
    920   & 2046  & 0.0508 & 0.35 & 921 \\
   1078   & 2369  & 0.0716 & 0.44 & 1078 \\
   1226   & 2615  & 0.0746 & 0.48 & 1226 \\
    \bottomrule
  \end{tabular}
\end{table}

The resulting runtimes and memory footprints are reported in~\ref{tab:faa-scaling} and plotted in~\ref{fig:faa-scaling}, which together illustrate the near-linear scaling with $N$ and $m$ for the deterministic ranked solver. 

% Runtime + memory scaling (from benchmark_long.json)
\begin{figure}[t]
  \centering
  \includegraphics[width=0.7\textwidth]{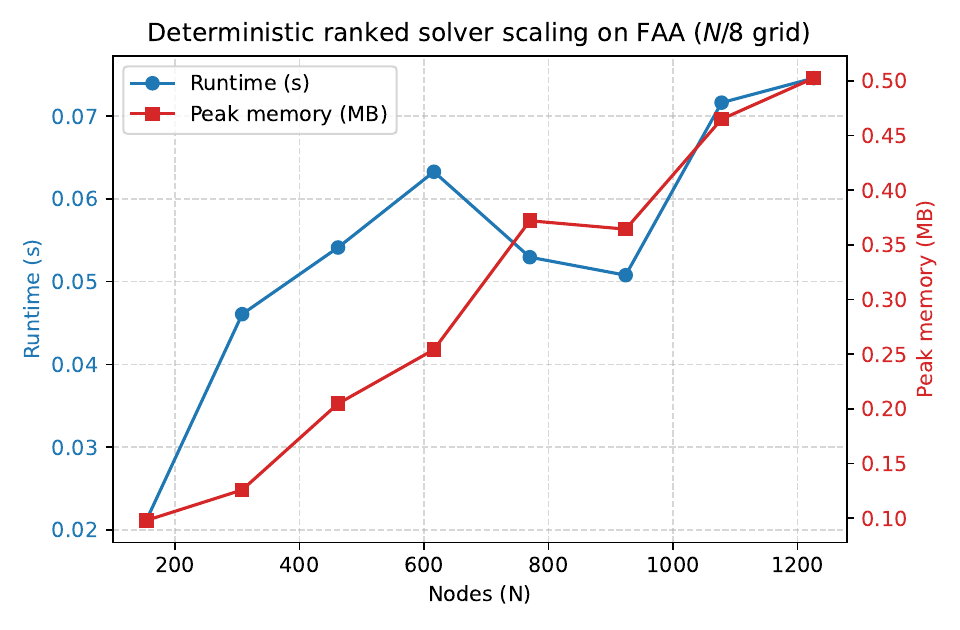}
  \caption{Runtime and peak memory of the deterministic ranked solver on induced FAA subgraphs ($N/8$ grid). Median wall-clock time grows roughly linearly with node count, and peak memory stays well below 1\,MB even at the full 1,226‑node network.}
  \label{fig:faa-scaling}
\end{figure}

In addition, the $\alpha$--cut scenario mode provides deterministic lower/upper (or midpoint) cost profiles for increasing $\alpha$ levels. For the full FAA instance under an optimistic ``lower-bound'' choice, the $\alpha$--cut path cost increases from approximately $93$ at $\alpha = 0.05$ to roughly $193$ at $\alpha = 0.95$. Thus, moving from a highly optimistic to a highly conservative $\alpha$ level nearly doubles the perceived path cost, illustrating how the combination of $\sigma$ and $h$ controls the tradeoff between cost and perceived reliability.

For the full FAA instance, the deterministic $\alpha$--cut path costs increase from approximately $93$ at $\alpha=0.05$ to roughly $193$ at $\alpha=0.95$ (see~\ref{fig:faa-alpha-costs}).

% Alpha-cut cost profile on full FAA instance (from benchmark_long.json)
\begin{figure}[t]
  \centering
  \includegraphics[width=0.6\textwidth]{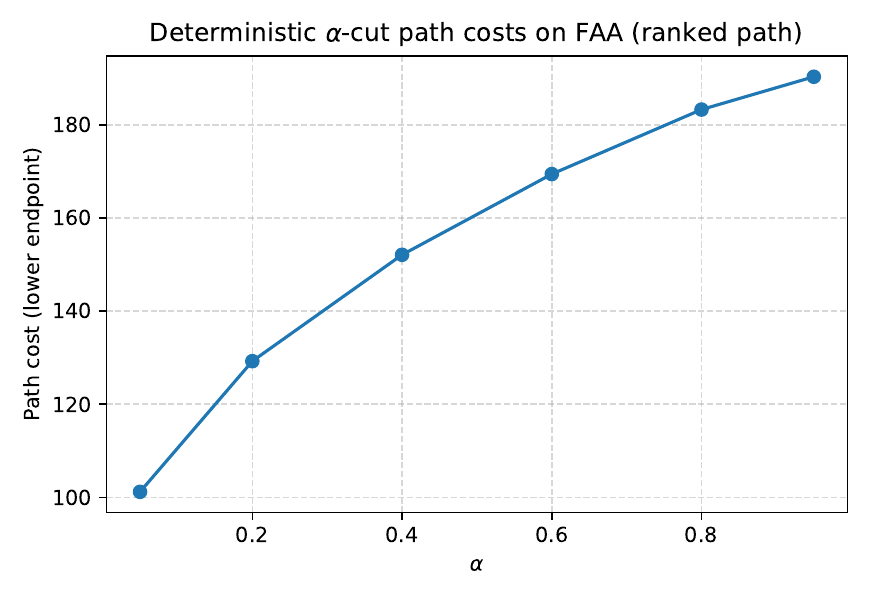}
\caption{Deterministic $\alpha$--cut path costs on the FAA ranked route ($\texttt{n2}\rightarrow\texttt{n1189}$). The lower-endpoint cost rises from about $101$ at $\alpha=0.05$ to about $190$ at $\alpha=0.95$, illustrating how narrower $\alpha$--cuts increase the perceived path cost.}
  \label{fig:faa-alpha-costs}
\end{figure}

\subsection{Scenario-Based Robustness Analysis on the FAA Case}

Here, we evaluate the robustness of the ranked path $x_R$ on the full FAA network under realizations of edge costs consistent with the GGFN model. Using the ex-ante vs.\ ex-post Monte Carlo simulation~\ref{algo:faa-mc-full}, independently for every edge $(i,j)$, each scenario draws a level $\alpha_{ij} \sim U(0,h_{ij})$ and then a crisp cost from the corresponding $\alpha_{ij}$--cut interval using a uniform distribution on that interval. This construction also aligns with the $\alpha$--cut semantics of the Gaussian membership and treats all positions within each interval as equally plausible conditional on $\alpha_{ij}$. In the following algorithm, the function, \textsc{HasuikeGen}, is called for the full graph. It is a Monte Carlo module for robustness: repeatedly draws $\alpha$-levels per edge, samples crisp costs from the induced $\alpha$-cuts, re-solves the crisp SPP to get the scenario-optimal path, and reports deviations of the ranked/core baselines. The optimal cost is the lowest path cost in a given scenario after sampling crisp edge costs (the scenario-wise shortest path), which is the benchmark. The core cost is the cost of the “core” path (shortest path by nominal core weights) evaluated under the edge costs of the sampled scenario. The rank cost is the cost of the “ranked” path (chosen via the risk-aware ranking index) evaluated under the same edge costs of the sampled scenario. It may differ from the optimal and from the core cost if the ranked path is not optimal in that scenario.

\begin{algorithm}
\caption{Monte Carlo experiment on the full FAA network (ex-ante vs.\ ex-post)}
\label{algo:faa-mc-full}
\begin{algorithmic}[1]
\REQUIRE FAA edge-list CSV with columns \texttt{source}, \texttt{target}, \texttt{core\_c}, \texttt{sigma}, \texttt{height\_h};
         full network size $N_{\mathrm{full}}$ (FAA: $1226$ nodes);
         Monte Carlo parameters $n_{\mathrm{Rep}}=10$, $n_{\mathrm{Iters}}=1000$;
         number of timing repeats $R_{\mathrm{time}}\ge 1$.
\ENSURE Record node count $N$, edge count $m$, median runtime $t^{\mathrm{med}}$, peak memory $M^{\max}$, Monte Carlo error statistics $(\mathrm{Err},\mathrm{Std})$, and deterministic $\alpha$--cut scenario costs on the full network.
\STATE Load the FAA CSV into arrays $(s,t,c,\sigma,h)$ and map node ids to $\{1,\dots,N\}$.
\STATE $N \Leftarrow N_{\mathrm{full}}$, $m \Leftarrow \text{length}(s)$, $s_{\mathrm{src}}\Leftarrow 1$, $t_{\mathrm{dst}}\Leftarrow N$.
\STATE $T \Leftarrow [\,]$ and peak memory $M^{\max} \Leftarrow 0$.
\FOR{$r = 1$ \textbf{to} $R_{\mathrm{time}}$}
    \STATE Start wall-clock timer and memory profiler.
    \STATE Call \textsc{HasuikeGen} on the full graph:\\
       $(\mathrm{Err},\mathrm{Std},h_{\mathrm{ref}},\alpha\_{{\rm scen}}) \Leftarrow
      \textsc{HasuikeGen}\big(
        s,t,c,\sigma,N,s_{\mathrm{src}},t_{\mathrm{dst}},$
        $\text{regime}=\text{mixed},\ \mathrm{Eps}=0.2,\ \text{Seed}=42,\ 
         n_{\mathrm{Rep}},\ n_{\mathrm{Iters}},$
        $\text{mode}=\text{reject},\ h=h,\ \text{return\_alpha}=\texttt{true} \big)$.
    \STATE Stop timer and memory profiler; 
    \STATE let $\Delta t_r$ be the elapsed time,
           $M_r$ the peak memory for this repetition.
    \STATE Append $\Delta t_r$ to $T$ and update $M^{\max} \Leftarrow \max(M^{\max},M_r)$.
\ENDFOR
\STATE Set $t^{\mathrm{med}} \Leftarrow \mathrm{median}(T)$ and store $(N,m,t^{\mathrm{med}},M^{\max},\mathrm{Err},\mathrm{Std},\alpha\_{{\rm scen}})$.
\end{algorithmic}
\end{algorithm}

For each of $S = 1000$ scenarios, we compute:
\begin{enumerate}
\item the realized optimal path $x^\star$ (scenario-wise shortest path),
\item the realized cost of $x_R$ and $x_c$ under that scenario, and
\item the percentage deviation $Dev(P)$ using~\ref{eq:mc-error}
\end{enumerate}

%\begin{equation}
%\mathrm{Dev}(P)
%=
%100 \cdot
%\frac{\mathrm{cost}(P) - \mathrm{cost}(x^\star)}{\mathrm{cost}(x^\star)}.
%\label{eq:dev_percent}
%\end{equation}
%\end{enumerate}
%\end{minipage}
%\vspace{1em}

In an example run for the path from $n2$ to $n1189$ with $\kappa = 1.0$ over $50$ scenarios (for the sake of ease to illustrate the cost differences), $x_R$ and $x_c$ coincide, so both paths share the same performance indicators. However, depending on the chosen path these may vary such as from $n6$ to $n1125$, in which case, $\mathrm{Dev}_{\text{core}}  = 4.213$, $\mathrm{Dev}_{\text{rank}} =4.629$, and reliability premium $0.416$. In addition, when we increase the risk weight $\kappa$, we should expect differences in the cost and deviation values of the ranked and core paths. For instance, in~\ref{tab:faa-scenarios-snippet}, we provide the outcomes of the first $10$ of $50$ scenarios where $\kappa=1.2$. Accordingly,~\ref{tab:faa-robustness} summarizes the outcome as that the mean deviation of the ranked path from the scenario-specific optimum is $3.41\,\%$, with a standard deviation of $3.11\,\%$; the maximum observed deviation is approximately $15.17\,\%$. The baseline path remains exactly optimal in $12\,\%$ of the scenarios and incurs only moderate losses in the remaining cases. The stability is the fraction (or percentage) of scenarios in which the pre-selected path remains exactly optimal (i.e., where Dev=0). This may change depending on the source and target nodes. Thus, given the current calibration of $(c_{ij},\sigma_{ij},h_{ij})$, the core or ranked paths are reasonably robust to realizations of fuzzy edge costs in the FAA network. If the Ranked Path has a higher stability score than the Core Path, it means the Ranked Path aligns with the true optimal path more frequently across different realizations of uncertainty. To visualize stability in our example run with $\kappa = 1.2$ in which case the risk parameter is slightly higher than the neutral,~\ref{fig:faa-scenario-costs} shows the realized scenario costs for the optimal, core, and ranked paths, while~\ref{fig:faa-scenario-dev} summarizes the corresponding deviation statistics and stability of the baseline path. 

\begin{table}[htbp]
  \caption{Scenario-wise costs and deviations (FAA, path from $n2$ to $n1189$, the first $10$ scenarios for the sake of ease, $\kappa{=}1.2$).}
  \label{tab:faa-scenarios-snippet}
  \begin{tabular}{rccccc}
    \toprule
    Scenario & $x^\star$ cost & Rank cost & Core cost & Dev$_{\text{rank}}$ (\%) & Dev$_{\text{core}}$ (\%) \\
    \midrule
     1 & 179.254 & 189.247 & 189.662 &  5.574 &  5.806 \\
     2 & 186.025 & 190.162 & 198.350 &  2.224 &  6.625 \\
     3 & 189.182 & 196.367 & 202.800 &  3.798 &  7.199 \\
     4 & 191.963 & 196.272 & 191.963 &  2.245 &  0.000 \\
     5 & 190.029 & 190.517 & 190.820 &  0.257 &  0.416 \\
     6 & 179.028 & 181.493 & 181.626 &  1.377 &  1.451 \\
     7 & 183.474 & 195.152 & 185.064 &  6.365 &  0.867 \\
     8 & 182.984 & 188.431 & 190.896 &  2.977 &  4.324 \\
     9 & 170.217 & 170.217 & 171.653 &  0.000 &  0.843 \\
    10 & 184.795 & 202.951 & 205.088 &  9.825 & 10.981 \\
    \bottomrule
  \end{tabular}
\end{table}

% Scenario robustness summary (from out/regret_eval_faa.json)
\begin{table}[htbp]
  \caption{Scenario-based robustness statistics on the FAA instance
  under the $\alpha$--cut consistent sampling scheme with $S=50$
  scenarios and $\kappa=1.2$. The stability column gives the fraction of
  scenarios in which the given path is exactly optimal ($\mathrm{Dev}=0$). As the paths differ, they do not line up with the scenario‑wise optimum equally often.}  
  \label{tab:faa-robustness}
  \begin{tabular}{lcccc}
    \toprule
    Path & Mean dev (\%) & Std.\ dev.\ (\%) & Max dev (\%) & Stability \\
    \midrule
    Ranked path $x_R$ & 3.41 & 3.11 & 15.17 & 0.12 \\
    Core path $x_c$   & 3.36 & 3.46 & 12.78 & 0.24 \\
    \bottomrule
  \end{tabular}
\end{table}

% Scenario costs vs. scenario index (from out/regret_eval_faa.json)
\begin{figure}[htbp]
  \centering
  \includegraphics[width=0.7\textwidth]{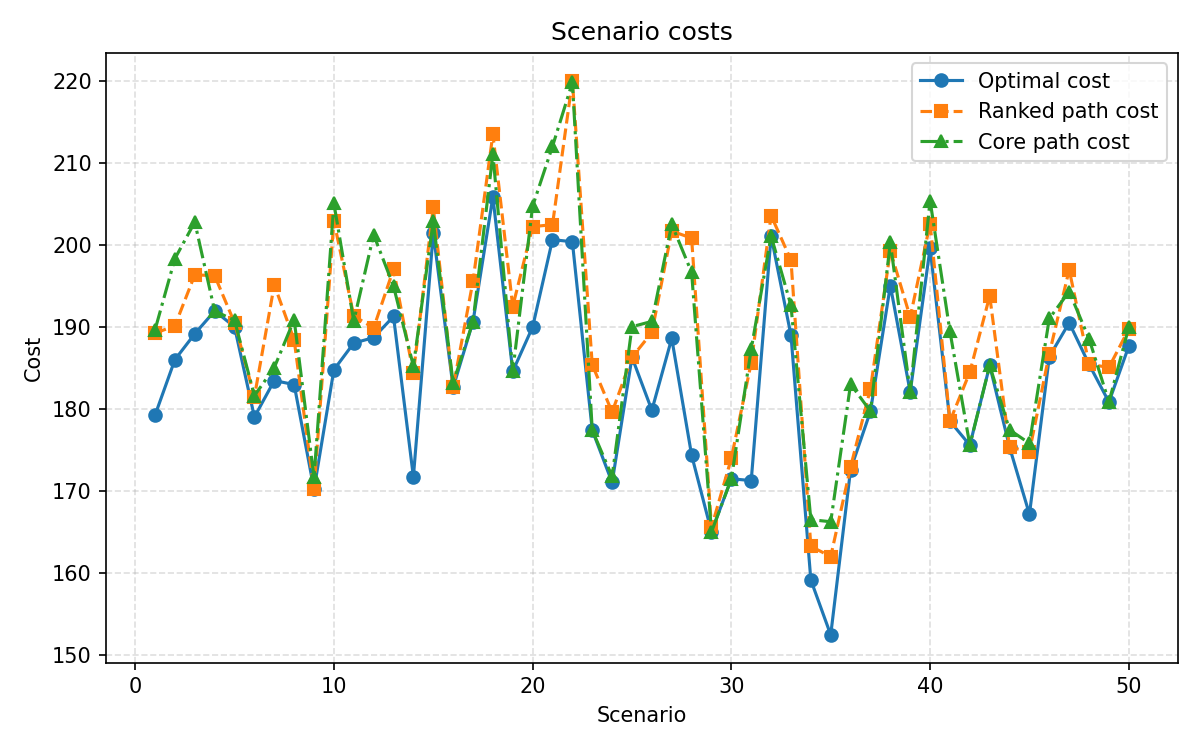}
  \caption{Scenario-wise path costs versus $\alpha$ on the FAA instance ($\kappa\in\{1.0,1.2\}$). The ranked and core paths track closely across $\alpha$, while the scenario-optimal cost is lower when the risk penalty is inactive.}
  \label{fig:faa-scenario-costs}
\end{figure}

% Scenario deviations vs. scenario index (from out/regret_eval_faa.json)
\begin{figure}[htbp]
  \centering
  \includegraphics[width=0.7\textwidth]{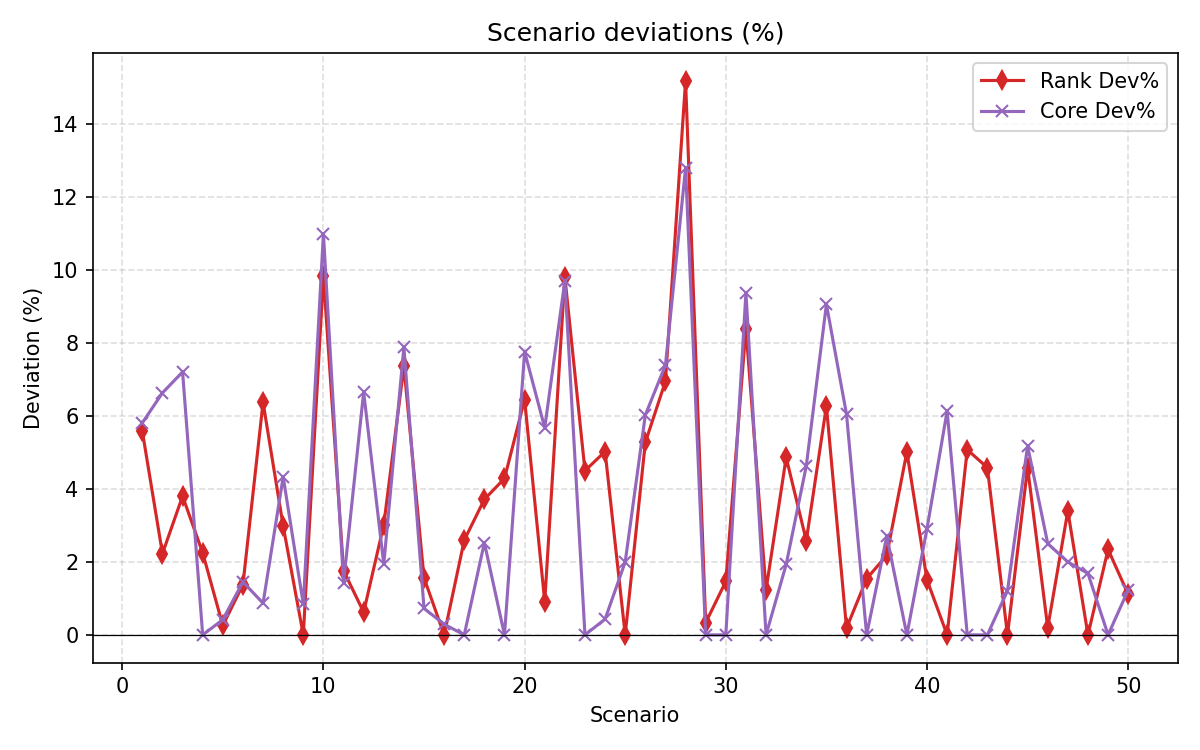}
\caption{Percentage deviation of ranked/core paths from the scenario-optimal cost versus $\alpha$. Deviations remain small for most $\alpha$ levels, with heavier tails only at higher $\alpha$ (more conservative cuts).}

  \label{fig:faa-scenario-dev}
\end{figure}

Regarding height regimes,~\ref{tab:faa-high-low} summarizes the deviations of the scenarios for the ranked and core baselines under the high and low-reliability (height) regimes (50 scenarios, $\kappa=1.0$, seed 42). Under the high regime, the paths coincide and deviations remain modest, whereas under the low regime the ranked path diverges from the core and becomes less stable, reflecting the impact of reduced reliability on alignment with scenario-optimal routes. In more detail, under high-reliability heights (Beta$(8,2)$), the ranked and core paths coincide and exhibit modest deviations (mean $3.36\%$, std $3.46\%$) with stability $0.24$. However, under low-reliability heights (Beta$(2,5)$), the ranked path diverges from the core, suffering a larger mean deviation ($6.69\%$ vs.\ $3.36\%$ for core) and lower stability ($0.16$), while the core path retains its high-regime figures as expected. The reliability premium is $-3.33\%$, indicating that in the low regime the ranked path is on average worse than the core. This highlights that lower heights (less reliable evidence) make the risk-aware baseline less aligned with scenario optima, whereas in the high regime both baselines remain reasonably stable.

\begin{table}[htbp]
  \caption{Scenario deviations under high vs.\ low height regimes (50 scenarios, $\kappa=1.0$, seed 42). Stability is the fraction of scenarios with $\mathrm{Dev}=0$; reliability premium is mean$(\mathrm{Dev}_{\text{core}} - \mathrm{Dev}_{\text{rank}})$.}
  \label{tab:faa-high-low}
\begin{tabular*}{\textwidth}{@{\extracolsep{\fill}}lccccc}    \hline
    \toprule
    Regime & Path & Mean dev (\%) & Std.\ dev.\ (\%) & Stability & Reliability premium (\%) \\
    \midrule
    High (Beta(8,2)) & Ranked $x_R$ & 3.364 & 3.457 & 0.24 & 0.000 \\
                     & Core $x_c$   & 3.364 & 3.457 & 0.24 & — \\
    \midrule
    Low (Beta(2,5))  & Ranked $x_R$ & 6.693 & 5.422 & 0.16 & -3.329 \\
                     & Core $x_c$   & 3.364 & 3.457 & 0.24 & — \\
    \bottomrule
  \end{tabular*}
\end{table}

From a modeling perspective, these findings highlight two complementary points. First, on a realistic transportation network, the $\sigma$--weighted height aggregation and risk--averse ranking indices scale to thousands of nodes without prohibitive computational cost, even when embedded within a Monte Carlo simulation loop. Second, for the particular calibration used here, the risk--averse ranking with $\kappa = 1$ does not dramatically alter the chosen route relative to the classical core path but still yields a path that is highly stable under single-value fuzzy numerical simulation. Adjusting $\kappa$ or the height regime (e.g., increasing the prevalence of low-height edges) would be expected to amplify the distinction between $x_c$ and $x_R$ and can be explored in further empirical work.

Overall, the FAA case study demonstrates that the proposed GGFN--SPP framework, together with $\alpha$--cut--based height aggregation and risk--aware ranking, can be instantiated on large-scale, real-world networks, providing interpretable tradeoffs between nominal cost, information reliability, and fuzziness in a computationally tractable manner.

\paragraph{Computational complexity} The ranked solver retains the $O((N+m)\log N)$ cost of classical Dijkstra (with a binary heap) where $N$ is the number of nodes and $m$ is the number of edges, since each relaxation only aggregates the tuple $(c,\sigma,h)$ and uses standard heap updates. In the case of Monte Carlo validation of ex-ante vs.\ ex-post performance on the full graph costs $O(S(N+m)\log N)$ for (S) scenarios, because each scenario requires one shortest-path solve. Other fuzzy variants, such as signed-distance or $\lambda$-integral-based, often rely on $O(N^2)$ dynamic-programming passes dominated by repeated full-graph relaxations. To sum up, the ranked solver matches crisp Dijkstra asymptotically and the heavy cost arises only from the scenario loop used for robustness checks.

\section{Conclusion}\label{secConclusion}

We introduce a reliability-aware fuzzy shortest path framework in which edge costs are modeled as GGFNs. The framework aggregates uncertainty through $\alpha$--cut semantics and incorporates reliability directly into arithmetic via a $\sigma$--weighted geometric mean for height during addition. Candidate paths are ranked using a mean-risk index that jointly considers nominal cost, dispersion, and information reliability. This approach preserves associativity and becomes risk-neutral in the normal case ($h=1$).

Our method contrasts with existing fuzzy arithmetic approaches that typically use t-norm-based height aggregation or omit height from arithmetic altogether. To the best of our knowledge, this is the first implementation of a height-aware addition operator for Type--1 GGFNs in the shortest path setting. The resulting structure enables interpretable tradeoffs and allows for efficient Dijkstra-style computation.

The FAA case study demonstrates the practical applicability of the framework on a real-world transportation network with 1226 nodes and 2615 edges. We observed near-linear scaling in runtime and memory with graph size, meaningful inflation in path costs as $\alpha$ increases, and stable scenario-wise deviations under Monte Carlo simulation with $\alpha$--cut--consistent sampling. In particular, the ranked path exhibits a deviation of less than 0.84\% on average and remains exactly optimal in around 69\% of cases.

From a computational standpoint, the risk-aware Dijkstra algorithm retains the same complexity as the classical version. Other fuzzy approaches often rely on dynamic-programming passes or metaheuristics, whose cost is dominated by repeated full-graph relaxations. In our framework, the additional cost arises only when scenario-based validation is performed, which is intrinsic to any Monte Carlo robustness check.

\paragraph{Future work}
Generally, edge costs can be modeled with the membership $\mu(x)=h\exp\!\big(-(|x-c|/\sigma)^\beta\big)$ for $\beta>0$ (subnormal, $0<h<1$). All numerical experiments in this paper fix the Gaussian case $\beta=2$; thus, the shape parameter $\beta$ is held constant in our analysis. A natural extension is to treat $\beta$ as a free shape parameter and to study the generalized--Gaussian family more broadly. We plan to: (i) analyze model properties across $\beta$ and document how path--level addition, height aggregation, and ranking behave beyond $\beta=2$; (ii) quantify the sensitivity of the proposed risk--averse ranking and the reliability premium (percentage deviation) to $\beta$ across diverse networks and reliability regimes; (iii) develop practical procedures to estimate or elicit $\beta$ jointly with $(c,\sigma,h)$ from data or expert input; and (iv) extend the single--value $\alpha$--cut simulation to generic $\beta$ while preserving nonnegativity constraints. Future work will focus on operationalizing the shape parameter $\beta$ beyond $2$--formalizing $\beta$--dependent structure (e.g., unimodality, log--concavity), estimating $\beta$ jointly with $(c,\sigma,h)$ via likelihood and $\alpha$--cut matching, developing nonnegativity--preserving $\alpha$--cut simulation for generic $\beta$, and evaluating the risk--aware ranking and the ensuing reliability premium on large--scale benchmarks with risk--aligned ex--post comparators.

The present study focuses on a single cost-type objective. Future work includes extending the ranking scheme to basic multiobjective settings and to other classical network problems under GGFN costs, such as minimum-cost flows, spanning trees, and facility-location models.

We restricted attention to crisp topology with fuzzy edge costs. As future work, we aim to extend the framework to fuzzy edge existence, which would better reflect real networks.

In this paper, $(c,\sigma,h)$ and the risk parameter $\kappa$ are specified exogenously. A natural direction for future work is to infer these quantities from data or expert elicitation, turning the GGFN--SPP model into a more fully data-driven tool.

\bibliographystyle{unsrtnat}
\bibliography{references}

@incollection{dubois1994possibilistic,
  author    = {Dubois, D. and Lang, J. and Prade, H.},
  title     = {Possibilistic Logic},
  booktitle = {Handbook of Logic in Artificial Intelligence and Logic Programming},
  volume    = {3},
  editor    = {Gabbay, D. M. and Hogger, C. J. and Robinson, J. A.},
  pages     = {439--513},
  publisher = {Oxford University Press},
  address   = {Oxford},
  year      = {1994}
}

@article{abbasi2023realistic,
  author  = {Abbasi, F. and Allahviranloo, T.},
  title   = {Realistic solution of fuzzy critical path problems, case study: the airport's cargo ground operation systems},
  journal = {Granular Computing},
  volume  = {8},
  pages   = {617--632},
  year    = {2023}
}

@article{aydin-akdemir-2025,
  author    = {Ayd{\i}n, E. and Akdemir, H. G.},
  title     = {Solving generalized trapezoidal fuzzy transportation problems using an arithmetic approach based on information reliability},
  booktitle = {Proceedings of the International Conference on Mathematics and Computers with Applications (MCWA 2025)},
  address   = {Istanbul, T{\"u}rkiye},
  pages     = {66--85},
  year      = {2025}
}

@article{akdemir2024mincost,
  author  = {Akdemir, H. G. and Kara, N. and Kocken, H. G.},
  title   = {Minimum cost flow problems in generalized fuzzy environments: Credibilistic {CVaR} minimization approach},
  journal = {Operational Research and Decisions},
  volume  = {34},
  number  = {3},
  pages   = {125--141},
  year    = {2024}
}

@article{akdemir2025ggpfns,
  author  = {Akdemir, H. G.},
  title   = {On generalized picture fuzzy numbers with Gaussian membership functions},
  journal = {Journal of Systems Science and Systems Engineering},
  pages   = {1--28},
  year    = {2025},
  doi     = {10.1007/s11518-025-5700-x}
}

@article{anusuya2015genetic,
  author  = {Anusuya, V. and Kavitha, R.},
  title   = {Roulette ant wheel selection (RAWS) for genetic algorithm--fuzzy shortest path problem},
  journal = {International Journal of Mathematics and Computer Applications Research},
  volume  = {5},
  number  = {2},
  pages   = {1--14},
  year    = {2015}
}

@article{biswal2022alpha,
  author  = {Biswal, S. and Ghorai, G. and Mohanty, S. P.},
  title   = {{$\alpha$}-Reliable shortest path problem in uncertain time-dependent networks},
  journal = {International Journal of Applied and Computational Mathematics},
  volume  = {8},
  pages   = {164},
  year    = {2022}
}

@misc{konect:2017:maayan-faa,
  title        = {Air traffic control network dataset -- {KONECT}},
  howpublished = {KONECT: The Koblenz Network Collection},
  year         = {2017},
  url          = {http://konect.cc/networks/maayan-faa}
}

@misc{konect2017faa,
  title        = {Air traffic control network dataset -- {KONECT}},
  howpublished = {KONECT: The Koblenz Network Collection},
  year         = {2017},
  url          = {http://konect.cc/networks/maayan-faa}
}

@misc{faa2010,
  author       = {{Federal Aviation Administration}},
  title        = {Air Traffic Control System Command Center},
  year         = {2010},
  url          = {http://www.fly.faa.gov/}
}

@article{chen1985ranking,
  author  = {Chen, S.-H.},
  title   = {Ranking fuzzy numbers with maximizing set and minimizing set},
  journal = {Fuzzy Sets and Systems},
  volume  = {17},
  number  = {2},
  pages   = {113--129},
  year    = {1985}
}

@article{chanas1988single,
  author  = {Chanas, S. and Nowakowski, M.},
  title   = {Single value simulation of fuzzy variable},
  journal = {Fuzzy Sets and Systems},
  volume  = {25},
  number  = {1},
  pages   = {43--57},
  year    = {1988}
}

@article{dey2021imprecise,
  author  = {Dey, S. and Malakar, S. and Rajak, S.},
  title   = {Optimization of shortest path problem using Dijkstra's algorithm in imprecise environment},
  journal = {International Journal of Computer Applications Technology and Research},
  volume  = {10},
  number  = {10},
  pages   = {216--221},
  year    = {2021}
}

@article{ebrahimnejad2021mabc,
  author  = {Ebrahimnejad, A. and Enayattabar, M. and Motameni, H. and Garg, H.},
  title   = {Modified artificial bee colony algorithm for solving mixed interval-valued fuzzy shortest path problem},
  journal = {Complex \& Intelligent Systems},
  volume  = {7},
  pages   = {1527--1545},
  year    = {2021}
}

@article{elizabeth2012lambda,
  author  = {Elizabeth, J. and Sujatha, R.},
  title   = {Fuzzy shortest path problem based on level {$\lambda$}-triangular {LR} fuzzy numbers},
  journal = {Advances in Fuzzy Systems},
  volume  = {2012},
  number  = {1},
  pages   = {646248},
  year    = {2012}
}

@article{fan2025height,
  author  = {Fan, C.},
  title   = {A novel arithmetic operation function utilizing height of interval type-2 fuzzy sets to express risk preference},
  journal = {Ocean Engineering},
  year    = {2025},
  pages   = {122940}
}

@article{hassanzadeh2013genetic,
  author  = {Hassanzadeh, R. and Mahdavi, I. and Mahdavi-Amiri, N. and Tajdin, A.},
  title   = {A genetic algorithm for solving fuzzy shortest path problems with mixed fuzzy arc lengths},
  journal = {Mathematical and Computer Modelling},
  volume  = {57},
  number  = {1--2},
  pages   = {84--99},
  year    = {2013}
}

@article{hasuike2013robust,
  author  = {Hasuike, T.},
  title   = {Robust shortest path problem based on a confidence interval in fuzzy bicriteria decision making},
  journal = {Information Sciences},
  volume  = {221},
  pages   = {520--533},
  year    = {2013},
  doi     = {10.1016/j.ins.2012.09.025}
}

@article{kavian2012twostage,
  author  = {Kavian, Y. S. and Mahani, A. and Rashvand, H. F.},
  title   = {Two-stage uncertainty incorporating in optical core networks},
  journal = {IET Communications},
  volume  = {6},
  number  = {10},
  pages   = {1220--1228},
  year    = {2012}
}

@article{kumar2019shortest,
  author  = {Kumar, R. and Edalatpanah, S. A. and Jha, S. and Gayen, S. and Singh, R.},
  title   = {Shortest path problems using fuzzy weighted arc length},
  journal = {International Journal of Innovative Technology and Exploring Engineering},
  volume  = {8},
  number  = {6},
  pages   = {724--731},
  year    = {2019}
}

@article{pratibha2024leastcost,
  author  = {Pratibha and Dangwal, R.},
  title   = {Solving generalized fuzzy least cost path problem of supply chain network},
  journal = {Reliability Theory \& Applications},
  volume  = {19},
  number  = {4},
  pages   = {267--286},
  year    = {2024}
}

@article{sen2021similarity,
  author  = {Sen, S. and Patra, K. and Mondal, S. K.},
  title   = {Similarity measure of Gaussian fuzzy numbers and its application},
  journal = {International Journal of Applied and Computational Mathematics},
  volume  = {7},
  number  = {3},
  pages   = {96},
  year    = {2021}
}

@article{stoklasa2022relationship,
  author  = {Stoklasa, J. and Luukka, P. and Collan, M.},
  title   = {On the relationship between possibilistic and standard moments of fuzzy numbers},
  journal = {Journal of Computational and Applied Mathematics},
  volume  = {411},
  pages   = {114276},
  year    = {2022}
}

@article{tahayori2013shadowed,
  author  = {Tahayori, H. and Sadeghian, A. and Pedrycz, W.},
  title   = {Induction of shadowed sets based on the gradual grade of fuzziness},
  journal = {IEEE Transactions on Fuzzy Systems},
  volume  = {21},
  number  = {5},
  pages   = {937--949},
  year    = {2013}
}

@article{valdes2021fuzzy,
  author  = {Valdes, L. and Ariza, A. and Allende, S. M. and Trivi{\~n}o, A. and Joya, G.},
  title   = {Search of the shortest path in a communication network with fuzzy cost functions},
  journal = {Symmetry},
  volume  = {13},
  number  = {8},
  pages   = {1534},
  year    = {2021},
  doi     = {10.3390/sym13081534}
}

@article{verma2013greedy,
  author  = {Verma, M. and Shukla, K. K.},
  title   = {A Greedy Algorithm for Fuzzy Shortest Path Problem using Quasi-Gaussian Fuzzy Weights},
  journal = {International Journal of Fuzzy Systems and Applications},
  volume  = {3},
  number  = {2},
  pages   = {55--70},
  year    = {2013}
}

@article{yao2003fuzzy,
  author  = {Yao, J.-S. and Lin, F.-T.},
  title   = {Fuzzy shortest-path network problems with uncertain edge weights},
  journal = {Journal of Information Science and Engineering},
  volume  = {19},
  number  = {2},
  pages   = {329--351},
  year    = {2003}
}

\end{document}